\newtheorem{thm}{Theorem}
\newtheorem{coro}{Corollary}
\newtheorem{lemma}{Lemma}
\begin{document}
%
\title{Electric Vehicle Charging Station Placement: Formulation, Complexity, and Solutions}

\author{Albert Y.S. Lam,
        Yiu-Wing Leung, and
        Xiaowen Chu
\thanks{A preliminary version of this paper was presented in \cite{confversion}.}
\thanks{The authors are with the Department
of Computer Science, Hong Kong Baptist University, Kowloon Tong,
Hong Kong (e-mail: \{ayslam, ywleung, chxw\}@comp.hkbu.edu.hk).}
}

%


\maketitle

\begin{abstract}
To enhance environmental sustainability,  many countries will electrify their transportation systems in their future smart city plans. So the number of electric vehicles (EVs) running in a city will grow significantly. There are many ways to re-charge EVs' batteries and charging stations will be considered as the main source of energy.  The locations of charging stations are critical; they should not only be pervasive enough such that an EV anywhere can easily access a charging station within its driving range, but also widely spread so that EVs can cruise around the whole city upon being re-charged. Based on these new perspectives, we formulate the Electric Vehicle Charging Station Placement Problem (EVCSPP) in this paper. We prove that the problem is non-deterministic polynomial-time hard. We also propose four solution methods to tackle EVCSPP and evaluate their performance on various artificial and practical cases. As verified by the simulation results, the methods have their own characteristics and they are suitable for different situations depending on the requirements for solution quality, algorithmic efficiency, problem size, nature of the algorithm, and existence of system prerequisite.
\end{abstract}

\begin{IEEEkeywords}
Charging station, electric vehicle, location, smart city planning.
\end{IEEEkeywords}

\IEEEpeerreviewmaketitle

\section*{Nomenclature}
\addcontentsline{toc}{section}{Nomenclature}
\begin{IEEEdescription}[\IEEEusemathlabelsep\IEEEsetlabelwidth{$V_1,V_2,V_3$}]
\item[$G$] The undirected graph modeling the city.
\item[$\mathcal{N}$] Set of potential charging station construction sites.
\item[$\mathcal{E}$] Set of connections connecting pairs of the construction sites.
\item[$n$] Size of $\mathcal{N}$.
\item[$d(i,j)$] Distance of the shortest path from nodes $i$ to $j$.
\item[$f_i$] Charging capacity of node $i$.
\item[$F_i$] Demand requirement of node $i$.
\item[$D$] Average traversable distance of fully charged electric vehicles.
\item[$\mathcal{N}'$] Set of nodes with charging stations constructed.
\item[$\alpha$] A discount factor.
\item[$h_{ij}$] Number of hops of the shortest path from nodes $i$ to $j$ in $G$.
\item[$x_i$] Boolean variable for construction at node $i$. 
\item[$x$] Vector of $x_i$'s.
\item[$c_i$] Construction cost at node $i$. 
\item[$\mathcal{N}_i^{\alpha D}$] Set of nodes within distance $\alpha D$ from node $i$. 
\item[$\hat{G}$] Induced graph from $G$. 
\item[$\mathcal{\hat{N}}$] Set of nodes in $\hat{G}$.
\item[$\mathcal{\hat{E}}$] Set of edges in $\hat{G}$.
\item[$H$] Induced subgraph of $\hat{G}$.
\item[$0^i$] Source node of flow attached to node $i$.
\item[$x_0^i$] Residue of flow remained in $0^i$.
\item[$y_{jk}^i$] Amount of flow on edge $(j,k)$ originated from $0^i$.
\item[$N(H)$] Set of nodes associated to $H$.
\item[$C$] A cost bound.
\item[$\tilde{G}$] Undirected graph for the vertex cover problem.
\item[$\mathcal{\tilde{N}}$] Set of nodes in $\tilde{G}$.
\item[$\mathcal{\tilde{E}}$] Set of edges in $\tilde{G}$.
\item[$\mathcal{\overline{N}}$] Set of node for node selection in the greedy algorithm.
\end{IEEEdescription}

\section{Introduction}

\IEEEPARstart{D}{ue} 
to the world's shortage of fossil fuels, nations compete to secure enough reserves of natural resources for sustainability. Seeking alternative energy sources becomes crucial to a nation's future development. One of the major fossil fuel consumptions is transportation. Many daily heavily demanded vehicles are powered by gasoline. A major consequence of burning fossil fuels is the release of tremendous amount of harmful gases, which partially constitutes the global warming effect and deteriorates people's health. Electricity is considered as the most universal form of energy, which can be transformed from and to another form effectively. By converting the endurable renewable energy, like solar and wind energies, to electricity, we can manipulate energy in a much cleaner manner. Electrification of transportation, like deployment of electric vehicles (EVs), can not only alleviate our demand on fossil fuels, but also foster a better environment for living. Therefore, EVs will become the major components in the future transportation system.

EVs take the central role in this paper and they have been being studied actively since the boom of the smart grid.
Incorporating EVs into an existing self-contained transportation system is challenging. Solely expanding the population of EVs in a city without enough road connections and corresponding charging and parking infrastructure will suppress the practicability of EVs due to their limiting moving ranges. 
Moreover, existing gas stations are primarily designed for gas refueling; combining charging infrastructure with the conventional gas stations may not be appropriate as the relatively longer charging process will saturate the limited space of the gas stations. We need to carefully plan EV charging facilities to modernize the transportation system. To be precise, we study how EVs will be integrated into the transportation system seamlessly with a focus on charging stations and this will help make our cities ``smart''. 

We study the Electric Vehicle Charging Station Placement Problem (EVCSPP) by finding the best locations to construct charging stations in a city.
An EV should always be able to access a charging station within its capacity anywhere in the city. Charging stations should be built widely enough such that the moving range of an EV can be extended to every corner of the city by having the EV re-charged at a charging station available nearby. We study the locations where charging stations should be constructed in a city such that we can minimize the construction cost with coverage extended to the whole city and fulfillment of drivers' convenience. In this paper, we formulate the problem as an optimization problem, based on the charging station accessibility and coverage in the city. We also study its complexity and propose various methods to solve the problem.

In this paper, we focus on the long-term human aspects rather than the technological ones. 
The smart city plan and technology advancement take different time-spans for realization. 
To meet the government policy in some countries, the population of EVs needs to be boosted. Satisfaction and convenience of drivers have strong impacts on the growth of EVs in a city. The ease of re-charging their EVs is one of the most important considering factors when one decides to buy an EV \cite{consumer}. Population density and the demand for charging facilities for their EVs are passive factors. The influence of these human factors usually takes longer (say 5-10 years) to be realized. On the other hand, technology advances in a much faster pace and the impact of the charging loads to the grid will be lessened with practical technological solutions, especially for security and reliability issues.

\textcolor{black}{As a whole, the complete charging station problem with consideration of all possible considering factors can be framed as a two-level problem. In the first level, a set of potential locations for charging station constructions can be determined based on some urban planning factors, e.g., land use type, environment impact, and  safety, and also some engineering factors addressed in some of the previous work explained in the next section. In the second level, charging station placement is further enhanced from the drivers' perspective and we place charging stations in the potential locations determined from the first level. So this work mainly falls into the second level. This arrangement allows us to focus on examining the problem from a new angle.}
Furthermore, a model without too much technical details of the charging facilities allows us to retain flexibility for different charging technologies and standards. For example, charging with connected power cables can be replaced by battery swapping. Our model can still be applied to the scenario with battery swapping EVs. We focus on the human factors and it can be served as the foundation for various kinds of charging specifications.

Our main contributions include formulating the new problem EVCSPP, analyzing its complexity, and proposing several solutions to the problem.
The rest of this paper is organized as follows. Related work is given in Section \ref{sec:relatedwork}. We formulate the problem in Section \ref{sec:formulation} and discuss its complexity in Section \ref{sec:complexity}. Section \ref{sec:solutions}  presents four solution methods. In Section \ref{sec:performance}, simulation results are provided for performance evaluation and we also compare the solution methods in terms of characteristics and suitability for different situations. Finally we conclude this paper in Section \ref{sec:conclusion}.

\section{Related Work} \label{sec:relatedwork}

Most of the existing work on EVs is related to studying the operational influence of EVs on the grid, i.e., how power is transferred from and to the grid. Besides charging scheduling \cite{langtong}, in a vehicle-to-grid (V2G) system, hundreds of EVs are coordinated to act as a power source selling power back to the grid or to support auxiliary services, like regulation. A multi-layer market for V2G energy trading was proposed in \cite{doublelayer}. The market price was settled via double auction and the proposed mechanism could maximize the EVs' revenues. 
In \cite{regulation}, a queueing network was utilized to model the dynamics of EVs participating in V2G. The model could facilitate service contract engagement for regulation ancillary services.
Ref. \cite{EVUC} investigated the joint scheduling of EVs and unit commitment and this allowed us to optimize the system's total running cost with the presence of EVs. 
Ref. \cite{TVT} discussed the incorporation of PV equipment into charging stations.  It considered that charging facilities equipped with PV panels and the stored solar energy, together with the power requested from the grid, can be used to power EVs. 
Refs. \cite{TPD} and \cite{IET} studied the impact of EV charging to the performance of power distribution networks with the presence of charging stations, which can represent rapid heavy loads. Ref. \cite{TPD} illustrated the effect of fast-charging EVs in terms of power-flow, short-circuit and protection while \cite{IET} proposed a new smart load management strategy to coordinate EVs for peak load shaving, power loss minimization, and voltage profile improvement. 
However, this paper is dedicated to studying the locations for building charging stations, which is an important aspect of the smart city plan.

Both \cite{TPD2} and \cite{PSO} investigated the location and sizing issues of charging stations; \cite{TPD2} handled the two issues separately while \cite{PSO} considered a joint optimization for both. In consideration of environmental factors (e.g., load locations, load balance, power quality, etc.) and service radius of charging stations, candidate sites in \cite{TPD2} were selected with a two-step screening method instead of optimization. Ref. \cite{PSO} constructed an optimization problem in which various kinds of costs (including construction, operating, and charging costs) were minimized with traffic flow and charging requirement constraints and particle swarm optimization heuristic was adopted to compute the solution of the non-convex problem.
Ref. \cite{siting} studied the siting and sizing issues, where the locations and numbers of chargers at each site are determined at the same time with consideration of charging demand. 
 Ref. \cite{sizing_PV} discussed how to allocate charging stations with the presence of solar generation.
Ref. \cite{seattle} determined the charging station locations with real-world public parking information of Seattle as inputs. It formed a mixed-integer program (MIP) by minimizing the total access costs to drivers' destinations from charging stations. 
Ref. \cite{superfast} discussed the design of power architectures and power electronics circuit topologies for high power superfast EV charging stations with enhanced grid support functionality.
Another related problem is the Gas Station Problem described in \cite{gas_station}. However, it was not related to gas station placement but determined the cheapest route connecting gas stations with other locations.
In operations research, the study of placing facilities, such as gas stations and fire stations, is generally cast as facility location problems \cite{facility_location}, e.g., the Maximal Covering Location Problem \cite{max_cover}. It concerned about the distances or times to travel to individual facilities from various locations. 
Such a model cannot guarantee that the induced subgraph constituted by the facility locations is connected but this condition is significant in our charging station placement model. 
Moreover, \cite{mayfield} gave a general discussion about the interior design of charging stations in various parking facility types instead of analyzing in the engineering perspective.
However, in this paper, we focus on the long-term issues of charging station placement for smart city planning, where the short-term factors (e.g., instantaneous loads) will be of relatively less importance. To the best of our knowledge, we are the first to study charging station placement from the new perspectives of the drivers' convenience and EVs' accessibility. Other factors, like traffic conditions, may also be taken into account but they are out of the scope of this work.




\section{Problem Formulation} \label{sec:formulation}
\subsection{System model}
We model a city with an undirected graph $G=(\mathcal{N},\mathcal{E})$, where $\mathcal{N}$ and $\mathcal{E}$ denote the sets of possible sites for constructing charging stations and connections between pairs of  sites, respectively. Suppose $|\mathcal{N}|=n. $ 
Let $d: \mathcal{N}\times \mathcal{N}\rightarrow \mathbb{R}^+$ be the distance function, where $d(i,j)$ denotes the distance of the shortest path from nodes $i$ to $j$ by traversing the connections.\footnote{The distance $d(i,j)$ refers to the distance of an actual path connecting locations $i$ and $j$ but not the Euclidean distance.} Let $f_i$ be the capacity of node $i$ representing the average capacity of charging service supported if a charging station is constructed at location $i$. It is related to the size of the site and traffic conditions in the surrounding. Each node $i$ also has a demand requirement $F_i$, which refers to its average local charging demand. The more EVs are based at location $i$, the higher $F_i$ is. $F_i$ can be estimated from the population size and the EV penetration rate of that location. Without loss of generality, we assume that some $F_i$'s are positive while some are of zero value.

We define $D$ to be the average distance able to be traversed by most typical EVs available in the market when being fully charged. A subset of nodes $\mathcal{N}'\subset \mathcal{N}$ is said to be reachable by $D$ if the following conditions hold:

 \renewcommand{\labelenumi}{C\arabic{enumi})}
\begin{enumerate}
\item For each $i\in \mathcal{N}'$, there exists a node $j\in \mathcal{N}'$ such that $d(i,j)\leq D$;
\item For each $i\in \mathcal{N}$, the total capacity, constituted from those nodes $j\in \mathcal{N}'$ such that $d(i,j)\leq \alpha D$ with discount factor $\alpha\in (0,1]$, is greater than or equal to $F_i$; and
\item For any $i,j\in \mathcal{N}'$, suppose $h_{ij}$ be the number of hops of the shortest path from $i$ to $j$ in $G$. The distance of the path $d(i,j)$ should be smaller than or equal to $h_{ij}D$. 
\end{enumerate}

$\mathcal{N}'$ represents the set of locations which have been selected with charging stations constructed. A city is well planned if $\mathcal{N}'$ is reachable by $D$. With Condition C1, an EV, which has been fully charged at one location, can re-charge at another site within distance $D$ away. C1 guarantees that EVs will not be confined in one single location (or area). Condition C2 says that the local charging demand at a location (e.g., $F_i$ at node $i$) must be satisfied by the total charging capacities contributed by those charging stations located within distance $\alpha D$ away. $\alpha$ is used to model the tolerance of drivers to move away from their current locations for re-charging. Its maximum value is one because an EV can traverse for a distance at most $D$. The smaller $\alpha$, the more conservative the model is, i.e., more charging stations should be placed around every possible location. With Condition C3, the charging station network, where each charging station is separated with another of at most distance $D$, spans the whole city. 
Note that we use one single $D$ to characterize the accessibility of the whole city for all kinds of EV models because the distribution of the charging stations should cater for all possible EVs traveling on the roads. To do this, we should assign $D$ with a more conservative value, e.g., the maximum travel distance of the most basic EV model in the market when being fully charged. 
To summarize, the conditions all together guarantee that the serving areas of the charging stations cover every corner of the city for all possible EVs.

\subsection{Formulation} \label{subsec:formualtion}
Let $x_i$ be the decision (Boolean) variable indicating if node $i$ is chosen for placement 
and $c_i$ be its construction cost. 
We minimize the total cost as the objective, i.e., $\sum_{i=1}^n c_ix_i$.

For each $i$, we define $\mathcal{N}_i^{\alpha D}=\{j\in \mathcal{N}|d(i,j)\leq \alpha D\}$, representing the set of nodes (including node $i$ itself) within distance $\alpha D$ from $i$. We can re-state Condition C2 as $\sum_{j\in \mathcal{N}_i^{\alpha D}}f_jx_j \geq F_i, \forall i\in\mathcal{N}$.
%
As Condition C3 holds for any pair of nodes, C3 implies C1. 
To re-state C3, we first create a graph $\hat{G} = (\hat{\mathcal{N}},\hat{\mathcal{E}})$, where $\hat{\mathcal{N}}$ is set to $\mathcal{N}$ and $\hat{\mathcal{E}}$ is equal to $\{(i,j)|i,j\in \mathcal{N}, d(i,j)\leq D,i\neq j\}$ (see the example shown in Fig. 1 of \cite{confversion}). 
Consider those nodes $i$ in $G$ with $x_i=1$ (i.e., $\mathcal{N}'$) and they constitute the corresponding induced subgraph $H$ of $\hat{G}$. Condition C3 is equivalent to having $H$ connected. In other words, $H$ has one single connected component. Instead of inspecting the original graph $G$, we can focus on $\hat{G}$ to formulate the problem. 
Similar to \cite{connectedsubgraph}, we adopt a network flow model to address C3. Consider that there is some virtual flow\footnote{The virtual flow here is independent of the traffic flow.} flown from some sources to some sinks. If the sources and the sinks are not connected, the flow from the sources cannot reach the sinks. Imagine that there is a source node $0^i$ attached to node $i$ and it has $n$ units of flow available to be sent along $\hat{G}$ through node $i$. 
Let $0\leq x_0^i\leq n$ be the residue of flow not consumed by the network. Each node $j$ with $x_j=1$ will consume one unit of flow. For each edge $(j,k)\in\mathcal{\hat{E}}$, we indicate the amount of flow on $(j,k)$ originated from $0^i$ with variable $y_{jk}^i$. Hence, we can guarantee that the flow can reach those nodes $j$ with $x_j=1$ from node $i$ on $\hat{G}$ with the following:
\begin{align}
x_0^i + y_{0i}^i &= n, \label{subcon1}\\
0\leq y_{jk}^i &\leq nx_k, \forall (j,k)\in \mathcal{\hat{E}}\cup (0^i,i),\label{subcon2}\\
\sum_{j|(j,k)\in \mathcal{\hat{E}}}y_{jk}^i &= x_k + \sum_{l|(k,l)\in \mathcal{\hat{E}}}y_{kl}^i, \forall k\in\mathcal{\hat{N}}\label{subcon3}\\
\sum_{j\in\mathcal{\hat{N}}}x_j &= y_{0i}^i,\label{subcon4}\\
0&\leq x_0^i.\label{subcon5}
\end{align}
Eq. $\eqref{subcon1}$ says that the total amount of flow $y_{0i}^i$ going out of the source $0^i$ and the retained $x_0^i$ in $0^i$ is $n$, where $n$ is the number of nodes in $G$ and it is the upper bound of flow possible to be absorbed in the network. Eq. $\eqref{subcon2}$ confines that only a sink can receive incoming flow and $\eqref{subcon3}$ describes that the total incoming flow to a node is equal to the total outgoing flow plus the amount for a sink. Eq. $\eqref{subcon4}$ explains that the total flow getting out of the source is equal to the total absorbed by the sinks and $\eqref{subcon5}$ restricts non-negative residue remained in the source.
\textcolor{black}{An illustrative example of the network flow model is given the appendix.}


Note that \eqref{subcon1}--\eqref{subcon5} require node $i$ 
to be selected for charging station construction. Otherwise, no flow from Source $0^i$ is allowed to be delivered to the sinks. To cater for this requirement, we attach a source node to each node in $\mathcal{\hat{N}}$ and the overall mathematical formulation of EVCSPP is modified accordingly as follows:
\vspace{-2mm}
\begin{subequations}
\small
\label{originalprob}
\begin{align}
\text{minimize}\quad 	& \sum_{i=1}^n{c_ix_i}  			 \label{Oobj}\\
\text{subject to}\quad 
& \sum_{j\in \mathcal{N}_i^{\alpha D}}f_jx_j \geq F_i, \forall i	\label{OFcons}    \\
& x_i=\{0,1\}, \forall i										\label{Obooleancons}\\
&  x_0^i+y_{0i}^i = n, \forall i\in \mathcal{\hat{N}} \label{Osubcon1}\\
& 0\leq y_{jk}^i \leq n x_ix_k, \forall (j,k)\in \mathcal{\hat{E}}\cup (0^i,i),\forall i\in\mathcal{\hat{N}}\\
&\sum_{j|(j,k)\in\mathcal{\hat{E}}}y_{jk}^i = x_ix_k + \sum_{l|(k,l)\in\mathcal{\hat{E}}}y_{kl}^i, \forall i,k\in\mathcal{\hat{N}}\label{Osubcon3}\\
& x_i\sum_{j\in\mathcal{\hat{N}}}x_j = y_{0i}^i,\forall i\in\mathcal{\hat{N}} \label{Osubcon4}\\
& 0\leq x_0^i,\forall i\in \mathcal{\hat{N}} \label{Osubcon5}.
\end{align}
\end{subequations}
Eqs. \eqref{Oobj} and \eqref{OFcons} have been discussed before. Eq. \eqref{Obooleancons} confines $x_i$ to be a Boolean variable. Eqs. \eqref{Osubcon1}--\eqref{Osubcon5} correspond to the induced connected subgraph condition (i.e., \eqref{subcon1}--\eqref{subcon5}) for all nodes.

Problem \eqref{originalprob} is an MIP with Boolean variables $x_i$'s and continuous variables $y_{jk}^i$'s. With the quadratic terms in Equality Constraints \eqref{Osubcon3} and \eqref{Osubcon4}, it is not a mixed-integer linear program (MILP). Hence, this problem is not easy to be solved. 

\textcolor{black}{
Before examining the complexity of the problem, we discuss the relationship of our model with the grid.
The range of power demand from a charging station can be inferred from the scale of the charging station (in terms of the number of chargers) and the usage pattern. With this information, the utility company which manages the distribution network can assess the risk of potential security problems from the expected loads of the charging stations. In the initial set of potential locations, we only consider those feasible places which allow power facility upgrade. 
We can make use of the charging capacity defined in Constraint \eqref{OFcons} to model this.
Moreover, many practical solutions for security and reliability are available to be incorporated into the grid easily. The charging stations can also be equipped with energy storage and renewable energy generation (e.g., from solar photovoltaic (PV) setup). In addition, practical methods, like installation of electric spring \cite{electricspring} and distributed active and reactive power injection control \cite{voltageregulation}, can be easily adopted to regulate the voltage with fast response time. Hence, the impact of sudden large energy demand leading to high voltage drop can be alleviated.
}

\section{Complexity Analysis} \label{sec:complexity}
The decision version of EVCSPP can be framed as follows:
Let $N(H)$ be the set of nodes associated to the induced subgraph $H$. Each node $i$ has a capacity $f_i\in \mathbb{Z}^+$ and a demand  $F_i$ and it is associated with the node set $\mathcal{N}_i^{\alpha D}$. Given an undirected graph $\hat{G}=(\mathcal{\hat{N}},\mathcal{\hat{E}})$, with the cost $c_i\in \mathbb{Z}^+, \forall i$, and a cost bound $C\in \mathbb{Z}^+$, does there exist an induced subgraph $H$ of $\hat{G}$ such that 
(i) For each $i\in \mathcal{\hat{N}}$, $\sum_{j\in \mathcal{N}_i^{\alpha D}\cap N(H)} f_j\geq F_i$; 
(ii) $H$ is connected; and
(iii) $\sum_{i\in N(H)}c_i\leq C$?
 \renewcommand{\labelenumi}{\arabic{enumi})}

\begin{thm} \label{thm:npc}
The decision version of EVCSPP is non-deterministic polynomial-time (NP)-complete.\footnote{In computational complexity theory, a decision problem is NP-complete if it is in the intersection of NP and NP-hard problem sets \cite{np_book}. There is no known method to solve such problem in polynomial time.}
\end{thm}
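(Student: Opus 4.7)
The plan is twofold: first show the decision version of EVCSPP lies in NP, then establish NP-hardness by a polynomial-time reduction from a known NP-complete problem. For NP membership, I would take the node selection $N(H)\subseteq\mathcal{\hat{N}}$ as the polynomial-size certificate. The three conditions can be checked in polynomial time: condition (i) is, for each of the $n$ nodes, a sum of at most $n$ capacities; condition (ii) is a standard BFS or DFS on the induced subgraph $H$; condition (iii) is a single summation. All of these run in time polynomial in $n+|\mathcal{\hat{E}}|$.

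For NP-hardness, I would reduce from the Connected Vertex Cover problem (CVC), known to be NP-complete: given a connected graph $\tilde{G}=(\mathcal{\tilde{N}},\mathcal{\tilde{E}})$ and an integer $k$, decide whether there is a vertex cover $S\subseteq\mathcal{\tilde{N}}$ with $|S|\le k$ whose induced subgraph in $\tilde{G}$ is connected. Given a CVC instance, I construct an EVCSPP instance as follows. For each edge $e=(u,v)\in\mathcal{\tilde{E}}$, introduce a ``witness'' node $v_e$; set $\mathcal{\hat{N}}=\mathcal{\tilde{N}}\cup\{v_e:e\in\mathcal{\tilde{E}}\}$ and let the restriction of $\mathcal{\hat{E}}$ to $\mathcal{\tilde{N}}$ equal $\mathcal{\tilde{E}}$ (edges incident to the witnesses play no role). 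For each witness set $\mathcal{N}_{v_e}^{\alpha D}=\{u,v\}$ and $F_{v_e}=1$; for each $u\in\mathcal{\tilde{N}}$ set $F_u=0$. Take $f_u=1$ for $u\in\mathcal{\tilde{N}}$ and $f_{v_e}=0$, costs $c_u=1$ for $u\in\mathcal{\tilde{N}}$ and $c_{v_e}=k+1$, and cost bound $C=k$. The construction is clearly polynomial in $|\mathcal{\tilde{N}}|+|\mathcal{\tilde{E}}|$.

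The correctness of the reduction goes in two directions. In the forward direction, given a CVC $S$ with $|S|\le k$, set $N(H)=S$: condition (i) holds because $S$ covers $\mathcal{\tilde{E}}$, so each demand $F_{v_e}=1$ is met by one of $u,v\in S$; condition (ii) holds because $S$ is connected in $\tilde{G}$, which coincides with the subgraph of $\hat{G}$ induced on $\mathcal{\tilde{N}}$; and the total cost is $|S|\le k$. In the reverse direction, any feasible $N(H)$ with total cost at most $k$ cannot contain any $v_e$, since a single witness alone costs $k+1>C$; the demand at every $v_e$ then forces $N(H)$ to cover every edge of $\tilde{G}$, and the EVCSPP connectivity requirement forces $N(H)$ to induce a connected subgraph of $\tilde{G}$, yielding a CVC of size at most $k$.

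The step I expect to be the main obstacle is choosing the right source problem. A reduction from ordinary Vertex Cover runs into the mismatch that EVCSPP requires the chosen set to induce a connected subgraph whereas VC does not, which would force a connectivity-killing gadget such as a ``hub'' vertex adjacent to every element of $\mathcal{\tilde{N}}$ and pinned into every feasible solution by an auxiliary demand. Reducing from CVC sidesteps this difficulty cleanly, at the cost of invoking the NP-completeness of CVC rather than the more elementary Vertex Cover. A secondary care point is ensuring the witness nodes are never selected, which the bound $c_{v_e}=k+1>C$ secures.
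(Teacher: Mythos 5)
Your proof is correct, but it takes a genuinely different route from the paper's. The paper reduces from plain \textsc{Vertex Cover}: it builds $\hat{\mathcal{N}}=\tilde{\mathcal{N}}\cup\tilde{\mathcal{E}}$, turns $\tilde{\mathcal{N}}$ into a clique, attaches each edge-node $e=(i,j)$ to its two endpoints, puts unit capacity only on edge-nodes and unit cost only on vertex-nodes, and sets $\mathcal{N}_i^{\alpha D}=\tilde{\mathcal{E}}$ with $F_i=|\tilde{\mathcal{E}}|$ everywhere so that every edge-node is forced into any feasible solution; connectivity of the induced subgraph then becomes equivalent to every edge-node having a selected endpoint, i.e.\ to the cover property, with the clique guaranteeing the selected vertices hang together. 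You instead start from \textsc{Connected Vertex Cover}, so the connectivity requirement of EVCSPP maps directly onto the connectivity requirement of the source problem and no clique gadget is needed; your demand gadget is also more local (each witness demands one unit from its two endpoints) rather than the paper's global ``demand all edge-nodes'' trick, and you price the witnesses out of any solution via $c_{v_e}=k+1$ where the paper forces them in via zero cost. The trade-off is exactly the one you identify: you must invoke the NP-completeness of CVC rather than the more elementary VC. Two further remarks. First, you explicitly verify membership in NP with a certificate and polynomial-time checks, which the paper omits entirely --- its proof establishes only the hardness half of the NP-completeness claim, so your write-up is actually more complete on this point. Second, both reductions silently assign zero capacities and costs to some nodes even though the decision version declares $f_i,c_i\in\mathbb{Z}^+$, and both treat the sets $\mathcal{N}_i^{\alpha D}$ as arbitrary inputs decoupled from any actual distance function; these are shared cosmetic liberties, not a defect specific to your argument.
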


\begin{proof}
Similar to \cite{connectedsubgraph}, we construct a reduction from the vertex cover problem (VCP) to EVCSPP. In the graph $\tilde{G}=(\mathcal{\tilde{N}},\mathcal{\tilde{E}})$, a vertex cover is a subset of nodes $\mathcal{N'}\subset \mathcal{\tilde{N}}$ such that each edge $(i,j)\in\mathcal{\tilde{E}}$ has either $i$ or $j$, or both in $\mathcal{\tilde{N}}$. Without loss of generality, we assume $\mathcal{\tilde{E}}\neq \emptyset$. VCP determines if there exists a vertex cover $\mathcal{N}'$ of $\tilde{G}$ with $|\mathcal{N}'|\leq C$.

We create a graph $\hat{G}=(\hat{\mathcal{N}},\hat{\mathcal{E}})$, where $\hat{\mathcal{N}}=\tilde{\mathcal{N}}\cup \tilde{\mathcal{E}}$ and $\hat{\mathcal{E}}$ is constructed as follows. For each pair of distinct nodes $i,j\in \tilde{\mathcal{N}}$, we create an edge $(i,j)$ in $\hat{\mathcal{E}}$; for each $e=(i,j)\in \tilde{\mathcal{E}}$, we append $(i,e)$ and $(e,j)$ to $\hat{\mathcal{E}}$. For each $i\in\tilde{\mathcal{N}}$, its cost is set as $c_i=1$ and zero otherwise. For each $e\in\tilde{\mathcal{E}}$, we set $f_e=1$ and zero otherwise. We also set $\mathcal{N}_i^{\alpha D} = \mathcal{\tilde{E}}$ and $F_i = |\mathcal{\tilde{E}}|$ for all $i\in \mathcal{\hat{N}}$.

We claim that VCP on $\tilde{G}$ \textcolor{black}{has a} cost upper bound $C$ if and only if EVCSPP has a solution with cost at most $C$.
Let $\mathcal{N}'$ be a vertex cover of $\tilde{G}$ with $|\mathcal{N}'|\leq C$ and $H$ be the induced subgraph of $\hat{G}$ by nodes $\mathcal{N}'\cup \tilde{\mathcal{E}}$. It is easy to verify that $|\mathcal{N}_i^{\alpha D}\cap N(H)|=|\mathcal{\tilde{E}}|$ and thus $\sum_{j\in \mathcal{N}_i^{\alpha D}\cap N(H)} f_j=|\mathcal{\tilde{E}}| = F_i$. As $\mathcal{N}'$ is a vertex cover, each $e=(i,j)\in \mathcal{\tilde{E}}$ must have at least one of $i$ and $j$ in $\mathcal{\tilde{N}}$ and thus $H$ must contain an edge $(e,k)$ for some $k\in \mathcal{N}'$. Moreover, $\mathcal{\tilde{N}}$ forms a clique in $\hat{G}$. Hence, $H$ must be connected. Since each $e\in \mathcal{\tilde{E}}\subset \mathcal{\hat{N}}$ imposes no cost, $H$ has the same cost as $\mathcal{N}'$ in $\tilde{G}$. Therefore, EVCSPP has a solution with cost at most $C$.

Consider that an induced subgraph $H$ is a solution of EVCSPP. We set $\mathcal{N}' = N(H)\cap \mathcal{\tilde{N}}$. $H$ contains $\mathcal{\tilde{E}}$: As $f_j=1$ for $j\in\mathcal{\tilde{E}}$, for any $i\in\mathcal{\hat{N}}$, $F_i=|\mathcal{\tilde{E}}|$ guarantees $\mathcal{\tilde{E}}\subset N(H)$. Since $H$ is connected, each $i\in\mathcal{N}'$ must have an edge with an $e \in\mathcal{\tilde{E}}$ in $\hat{G}$. Moreover, $\mathcal{N}'$ has at most $C$ nodes. Hence, $\mathcal{N}'$ is a vertex cover of $\tilde{G}$ with $|\mathcal{N}'| \leq C$.
\end{proof}

With Theorem \ref{thm:npc}, we have an immediate corollary as follows:
\begin{coro}
EVCSPP is NP-hard.\footnote{An optimization problem is NP-hard if it is at least as hard as the hardest problems in the NP problem set \cite{np_book}.}
\end{coro}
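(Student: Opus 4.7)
My plan is the standard two-step route for NP-completeness. First I would dispatch membership in NP by noting that a candidate node set $\mathcal{N}'\subseteq\hat{\mathcal{N}}$ serves as a short certificate: one can verify conditions (i), (ii), and (iii) — a sum of capacities per $i$, connectivity of the induced subgraph via a BFS/DFS, and a single cost sum — all in polynomial time in $|\hat{\mathcal{N}}|$. The real work is NP-hardness.

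For hardness I would reduce from the Vertex Cover Problem (VCP), which is a natural choice because condition (i) has the flavour of a covering constraint while condition (ii) enforces connectedness, and both can be encoded simultaneously on a bipartite-like gadget. Given a VCP instance $(\tilde{G}=(\tilde{\mathcal{N}},\tilde{\mathcal{E}}),C)$, I would construct $\hat{G}=(\hat{\mathcal{N}},\hat{\mathcal{E}})$ with $\hat{\mathcal{N}}=\tilde{\mathcal{N}}\cup\tilde{\mathcal{E}}$, making $\tilde{\mathcal{N}}$ into a clique (so connectivity among vertex-nodes is automatic) and adding, for every $e=(u,v)\in\tilde{\mathcal{E}}$, the two edges $(u,e)$ and $(v,e)$ in $\hat{\mathcal{E}}$. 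I would set $c_i=1$ for $i\in\tilde{\mathcal{N}}$ and $c_e=0$ for $e\in\tilde{\mathcal{E}}$; $f_e=1$ on edge-nodes and $f_i=0$ on vertex-nodes; and $\mathcal{N}_i^{\alpha D}=\tilde{\mathcal{E}}$ together with $F_i=|\tilde{\mathcal{E}}|$ for every $i\in\hat{\mathcal{N}}$. This forces $\tilde{\mathcal{E}}\subseteq N(H)$ in any feasible $H$ while charging cost only for the vertex-nodes chosen.

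Then I would prove the two-way correspondence. For the forward direction, given a vertex cover $\mathcal{N}'$ with $|\mathcal{N}'|\le C$, take $H$ to be the subgraph of $\hat{G}$ induced by $\mathcal{N}'\cup\tilde{\mathcal{E}}$: condition (i) holds because $N(H)\supseteq\tilde{\mathcal{E}}$; connectedness holds because the vertex-nodes lie inside the clique and every edge-node $e=(u,v)$ is adjacent (by the covering property) to some node of $\mathcal{N}'$; and the cost is exactly $|\mathcal{N}'|\le C$. For the converse, given a feasible $H$ with cost at most $C$, the tightness of (i) together with $F_i=|\tilde{\mathcal{E}}|$ and $f$ being supported only on edge-nodes forces $\tilde{\mathcal{E}}\subseteq N(H)$; then connectivity of $H$ demands that for each $e=(u,v)\in\tilde{\mathcal{E}}$ at least one of $u,v$ belongs to $N(H)$, else $e$ would be isolated. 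So $N(H)\cap\tilde{\mathcal{N}}$ is a vertex cover of cardinality at most $C$. The NP-hardness of EVCSPP follows, and combined with NP-membership, the decision version is NP-complete.

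The step I expect to be most delicate is the converse implication: I must argue that condition (i) genuinely forces \emph{all} edge-nodes into $N(H)$ rather than allowing a sneaky partial selection. The choice $F_i=|\tilde{\mathcal{E}}|$ paired with $\mathcal{N}_i^{\alpha D}=\tilde{\mathcal{E}}$ makes the capacity constraint tight, and since $f$ vanishes on $\tilde{\mathcal{N}}$ no vertex-node can substitute. A minor caveat is that $\hat{G}$ in the decision problem is given as abstract input (not derived from explicit distances on some underlying $G$), so freely choosing the edges of $\hat{G}$, the costs, capacities, and demands is legitimate; otherwise one would have to also exhibit a base graph $G$ and distance values consistent with the clique-plus-stars structure, which is a straightforward but tedious elaboration.
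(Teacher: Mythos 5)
Your proposal is correct and follows essentially the same route as the paper: the paper derives this corollary immediately from its Theorem~1 (NP-completeness of the decision version), which it proves by exactly the reduction you describe --- $\hat{\mathcal{N}}=\tilde{\mathcal{N}}\cup\tilde{\mathcal{E}}$ with a clique on $\tilde{\mathcal{N}}$, unit costs on vertex-nodes and zero on edge-nodes, $f_e=1$ only on edge-nodes, and $\mathcal{N}_i^{\alpha D}=\tilde{\mathcal{E}}$, $F_i=|\tilde{\mathcal{E}}|$ forcing all edge-nodes into any feasible solution. Your converse argument (each edge-node would be isolated unless an endpoint is selected) is if anything slightly more careful than the paper's.
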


\section{Proposed Solutions} \label{sec:solutions}
Since the problem is NP-hard, there is no trivial way to solve it. In this section, we propose four solution methods to tackle the problem. They possess their own pros and cons and one method may be more suitable for a particular situation than another.
\subsection{Method I: Iterative Mixed-Integer Linear Program}

Eqs. \eqref{subcon1}--\eqref{subcon5} can be used to guarantee that the solution subgraph constituted by all nodes $j$ with $x_j=1$ is connected as long as $x_i=1$. If we assume that node $i$ is one of locations for charging station construction, i.e., $x_i=1$, Problem \eqref{originalprob} can be reduced to
\vspace{-2mm}
\begin{subequations}
\label{probi}
\small
\begin{align}
\text{minimize}\quad 	& \sum_{k=1}^n{c_kx_k}\\
\text{subject to}\quad 
& \sum_{j\in \mathcal{N}_k^{\alpha D}}f_jx_j \geq F_k, \forall k\in\mathcal{\hat{N}},    \label{oifeq}\\
& x_k=\{0,1\}, \forall k\in\mathcal{\hat{N}}										\label{OObooleancons}\\
&  x_0^i + y_{0i}^i = n, \label{OOsubcon1}\\
& 0\leq y_{jk}^i \leq nx_k, \forall (j,k)\in \mathcal{\hat{E}}\cup (0^i,i),\\
&\sum_{j|(j,k)\in \mathcal{\hat{E}}}y_{jk}^i = x_k + \sum_{l|(k,l)\in \mathcal{\hat{E}}}y_{kl}^i, \forall k\in\mathcal{\hat{N}}\\
& \sum_{j\in\mathcal{\hat{N}}}x_j = y_{0i}^i,\\
& 0\leq x_0^i,\\
& x_i = 1. \label{oi1eq}
\end{align}
\end{subequations}
Problem \eqref{probi} is an MILP and it can be solved with standard MIP solvers applying methods like branch-and-bound. Now the question becomes which node $i$ should be chosen for this purpose. 

We write the solution of Problem \eqref{probi} as $\inf_{x\in\Omega_i}\sum_{j=1}^n{c_jx_j}$, where $\Omega_i$ is the solution space constituted by \eqref{oifeq}--\eqref{oi1eq} and $\inf$ refers to the infimum operator. We have the following theorem:
\begin{thm}
The solution of Problem \eqref{originalprob} can be determined by solving
\vspace{-3mm}
{\small
\begin{align}
\min_{1\leq i\leq n}(\inf_{x\in\Omega_i}\sum_{j=1}^n{c_jx_j}). \label{m1formulation}
\end{align}
}
\end{thm}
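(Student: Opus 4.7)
The plan is to establish that Problem \eqref{originalprob} and the family $\{\eqref{probi}\}_{i=1}^n$ describe the same feasible set once one ranges over all anchor indices $i$, so that taking the minimum of their optimal values recovers the optimum of \eqref{originalprob}. I would proceed by proving two containments between feasible sets and then comparing objective values, which is straightforward once the containments are in place.

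First, I would show that any feasible $x^\ast$ for \eqref{originalprob} is feasible for \eqref{probi} for some $i$. Assuming the nontrivial case (at least one $F_k>0$, which forces some $x_j^\ast=1$ by \eqref{OFcons}), pick any index $i$ with $x_i^\ast=1$ and set this $i$ as the anchor. By the discussion surrounding \eqref{subcon1}--\eqref{subcon5}, which was already used to encode C3, the network flow constraints with a single source $0^i$ attached to an anchor node $i$ that is itself selected admit a feasible flow $\{y_{jk}^i\}, x_0^i$ if and only if the induced subgraph $H$ on $\{j:x_j^\ast=1\}$ is connected in $\hat G$. Since $x^\ast$ is feasible for \eqref{originalprob}, $H$ is connected, so such a flow exists, and $x^\ast$ (together with this flow) satisfies all of \eqref{oifeq}--\eqref{oi1eq}. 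Hence $x^\ast\in \Omega_i$.

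Next, I would prove the converse: for every $i$ and every feasible $x\in\Omega_i$, $x$ is feasible for \eqref{originalprob}. Constraints \eqref{oifeq} and \eqref{OObooleancons} directly reproduce \eqref{OFcons} and \eqref{Obooleancons}. For the connectivity condition C3, the clamp $x_i=1$ guarantees the source $0^i$ is attached to a selected node; the flow-conservation equations \eqref{OOsubcon1}--\eqref{Osubcon5} then force one unit of flow to reach every node $k$ with $x_k=1$ along edges of $\hat G$, which is precisely the certificate that the induced subgraph $H$ is connected. Because \eqref{originalprob} requires \eqref{Osubcon1}--\eqref{Osubcon5} to hold for every anchor simultaneously while \eqref{probi} only imposes them for one anchor, one might worry about a gap; but since $H$ is connected, the same flow pattern can be constructed from \emph{any} selected node as source, so the "for all $i$" constraints of \eqref{originalprob} are automatically satisfiable. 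Thus $x$ is feasible for \eqref{originalprob}.

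Finally, combining the two directions, $\bigcup_{i=1}^n \Omega_i^x = \mathcal{F}$, where $\Omega_i^x$ is the projection of $\Omega_i$ onto the $x$-coordinates and $\mathcal{F}$ is the feasible $x$-set of \eqref{originalprob}. Since the objective $\sum_j c_j x_j$ depends only on $x$, minimizing over $\mathcal{F}$ equals $\min_{1\leq i\leq n}\inf_{x\in\Omega_i}\sum_j c_jx_j$, yielding \eqref{m1formulation}. The degenerate case where the optimum of \eqref{originalprob} has $x\equiv 0$ (only possible when all $F_i=0$ and connectivity is vacuous) can be handled as a separate remark, since then each $\Omega_i$ yields cost at least $c_i>0$; in such a case one replaces the $\min$ with a $\min$ over $\{0\}\cup\{1,\dots,n\}$, or equivalently notes that the formulation targets the meaningful non-empty deployment regime. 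The main obstacle I anticipate is rigorously arguing that the network flow certificate, which is stated only for the single anchor $i$ in \eqref{probi}, suffices to recover the "for all $i$" version in \eqref{originalprob}; this reduces to the standard fact that connectivity of $H$ is witnessed by a spanning tree rooted at any of its vertices.
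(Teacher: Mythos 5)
Your proof is correct and follows essentially the same route as the paper's: decompose the feasible set of \eqref{originalprob} as the union over anchors $i$ of the (projected) feasible sets $\Omega_i$, using that any feasible point of \eqref{originalprob} must select some node $i$ and hence lies in $\Omega_i$, and conversely that each point of $\Omega_i$ is feasible for \eqref{originalprob}. Your write-up is in fact more careful than the paper's --- notably in handling the all-zero degenerate case (excluded by the paper's standing assumption that some $F_i>0$) and in arguing that a single-anchor flow certificate suffices to satisfy the ``for all $i$'' flow constraints of \eqref{originalprob} --- but the underlying argument is the same.
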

\begin{proof}
The solution of \eqref{originalprob} can induce a connected subgraph of $\hat{G}$ where \eqref{OFcons} is satisfied and the total cost function \eqref{Oobj} is minimized. When we solve \eqref{probi}, its solution induces a connected subgraph of $\hat{G}$ with minimum total cost satisfying \eqref{OFcons} \textit{where node $i$ is included}.
To compute \eqref{m1formulation}, we apply \eqref{probi} to every node $i$ (thus we solve \eqref{probi} $n$ times) and its solution is the minimum among the $n$ subgraphs. Since \eqref{Oobj} and \eqref{OFcons} exist in every \eqref{probi}, the solution of \eqref{originalprob} must be one of the computed subgraphs with some node $i$ included. Hence,  addressing \eqref{m1formulation} can solve \eqref{originalprob}.
\end{proof}

With this result, the original mixed-integer \textit{non-linear} program becomes $n$ solvable MILPs.
Since we need to go through all nodes iteratively, we call this method Iterative MILP.

Besides the fact that the computational time of solving MILP \eqref{probi} grows super-linearly with $n$, this method also suffers from the problem that the number of MILPs (i.e., \eqref{probi}) needed to be solved also increases with $n$. Hence, the combined effect of increasing $n$ will make its computation time accelerating extraordinarily fast. Hence, this method is only applicable to small problem instances.
If the solver applied to \eqref{probi} can produce the optimal solution, Method I will guarantee the optimality.

\subsection{Method II: Greedy Approach}

Here we present an efficient greedy algorithm, which is applicable to the original formulation \eqref{originalprob} and requires much shorter computation time. Before discussing its details, we have the following lemma to facilitate its development.  

\begin{lemma}\label{lemma:upperbound}
Problem \eqref{originalprob} (and Problem \eqref{probi}) is feasible if and only if $x=[x_1,\ldots,x_n] = [1,\ldots,1]$ is a feasible solution, which gives an upper bound of the objective function value $\sum_{i=1}^{n}c_i$.
\end{lemma}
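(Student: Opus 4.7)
My plan is to establish the biconditional by direct verification of every constraint of \eqref{originalprob} at $x=\mathbf{1}$. The backward direction --- if $\mathbf{1}$ is feasible then the problem is feasible --- is immediate by definition, so the substance is the forward direction: starting from an arbitrary feasible point $x^{\star}$, conclude that $\mathbf{1}$ itself is feasible.

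I would proceed constraint by constraint. For the demand inequality \eqref{OFcons}, monotonicity suffices: since each $f_j\ge 0$ and $x^{\star}_j\le 1$, one has $\sum_{j\in\mathcal{N}_i^{\alpha D}} f_j \ge \sum_{j\in\mathcal{N}_i^{\alpha D}} f_j x^{\star}_j \ge F_i$, so raising every coordinate of $x$ to $1$ preserves the inequality. The Boolean condition \eqref{Obooleancons} is trivial. For the network-flow block \eqref{Osubcon1}--\eqref{Osubcon5}, substituting $x=\mathbf{1}$ reduces the system, for each source index $i$, to the task of routing $n$ units of flow out of $0^i$ so as to deposit exactly one unit at every node of $\hat{G}$. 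I would exhibit such flows explicitly: pick any spanning arborescence of $\hat{G}$ rooted at $i$, set $x_0^i=0$, and on each tree arc $(j,k)$ let $y^i_{jk}$ equal the number of nodes in the subtree rooted at $k$; a telescoping check then verifies \eqref{Osubcon1}--\eqref{Osubcon5} node by node. Evaluating \eqref{Oobj} at $\mathbf{1}$ gives the stated upper bound $\sum_i c_i$. The same verification transfers verbatim to Problem \eqref{probi}, since the additional clamp $x_i=1$ is automatically satisfied by $\mathbf{1}$.

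The main obstacle will be the flow step, which tacitly requires $\hat{G}$ itself to be connected, so that the spanning arborescence invoked above exists. I would justify this by appealing to the standing modeling assumption that the potential-site graph is connected by construction; were it disconnected, Condition C3 together with the existence of a feasible $x^{\star}$ already forces all selected nodes (and, by \eqref{OFcons}, all demand) to live in a single component of $\hat{G}$, so one could restrict attention to that component without altering the argument. Everything beyond this connectivity point is routine bookkeeping.
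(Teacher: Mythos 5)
Your proposal is correct in substance but proves the connectivity half of the lemma by a genuinely different route than the paper. For the demand constraint \eqref{OFcons} you and the paper do the same thing (monotonicity in $x$). For the flow block \eqref{Osubcon1}--\eqref{Osubcon5}, the paper never touches the $y^i_{jk}$ variables at all: it argues incrementally, starting from the feasible point $x^{\star}$ and flipping one zero coordinate to one at a time, using the demand constraint at the flipped node $j$ to produce a selected neighbor $k$ within distance $D$ so that $j$ attaches to the already-connected induced subgraph; after finitely many flips it reaches $\mathbf{1}$. You instead verify feasibility of $\mathbf{1}$ from scratch by exhibiting an explicit flow on a spanning arborescence of $\hat{G}$, which is more concrete (your subtree-size assignment does satisfy \eqref{Osubcon1}--\eqref{Osubcon5}, with $x_0^i=0$ forced by \eqref{Osubcon4}) but makes connectedness of $\hat{G}$ an outright hypothesis. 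The trade-off is instructive: the paper's induction tries to \emph{derive} connectivity of the all-ones subgraph from $x^{\star}$, but its attachment step silently assumes $F_j>0$ at the node being flipped (the paper explicitly allows some $F_i=0$, and at such a node \eqref{OFcons} yields no nearby selected $k$), so it too is leaning on an unstated connectivity assumption; you surface that assumption honestly. One caution on your fallback: if $\hat{G}$ really is disconnected (say, an isolated zero-demand node alongside a feasible component), then the full vector $\mathbf{1}$ induces a disconnected subgraph and is \emph{not} feasible even though the problem is, so ``restricting to one component'' proves a modified lemma rather than the stated one. That is a defect of the lemma's statement shared by the paper's own proof, not a flaw introduced by your argument, but it is worth stating the connectivity of $\hat{G}$ as an explicit standing hypothesis rather than as something recoverable from $x^{\star}$.
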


\begin{proof}
First we consider the only if-direction. As the problem is feasible, there exists a feasible $x'=[x'_1,\ldots,x'_n]$, composed of some 0's and/or 1's, satisfying Constraints \eqref{OFcons}--\eqref{Osubcon5}. If $x'_i=1$ for all $i$, then we have the result. Consider that there is at least one $j$ such that $x'_j=0$. If we produce another $x''$ by modifying $x'_j$ with value one, besides \eqref{Obooleancons}, $x''$ will always satisfy Constraint \eqref{OFcons}, as we will not change or increase the sum on the left-hand side of \eqref{OFcons}. Moreover, as $0< \alpha\leq 1$, if $x'_j=0$ satisfies \eqref{OFcons}, there exists at least one node $k$ with $x_k'=x_k''=1$ within distance $D$ away from node $j$. In this way, if we have $x''_j=1$, we will attach node $j$ to the subgraph induced by $x'$ through node $k$. In other words, the subgraph induced by $x''$ is still connected, i.e., satisfying \eqref{Osubcon1}--\eqref{Osubcon5}. We can repeat this process until we change all 0's to 1's and this produces $x$ with upper bound $\sum_{i=1}^{n}c_i$.

The if-direction is trivial. We complete the proof.
\end{proof}

\begin{coro}\label{coro:infeasible}
If $x=[x_1,\ldots,x_n] = [1,\ldots,1]$ is not feasible, EVCSPP is infeasible.
\end{coro}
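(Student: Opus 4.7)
The plan is to derive Corollary \ref{coro:infeasible} as the immediate contrapositive of Lemma \ref{lemma:upperbound}. Lemma \ref{lemma:upperbound} establishes the biconditional that EVCSPP is feasible if and only if the all-ones vector $x=[1,\ldots,1]$ is a feasible solution. Negating both sides of this biconditional yields: EVCSPP is infeasible if and only if $x=[1,\ldots,1]$ is not feasible. The corollary is then exactly the forward direction of this negated biconditional.

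Concretely, I would argue as follows. Suppose, for contradiction, that $x=[1,\ldots,1]$ is not feasible but EVCSPP is feasible. Then by the only-if direction of Lemma \ref{lemma:upperbound}, the feasibility of EVCSPP guarantees that $x=[1,\ldots,1]$ is itself a feasible point (since the lemma's only-if argument showed how to inflate any feasible $x'$ into the all-ones vector while preserving Constraints \eqref{OFcons}--\eqref{Osubcon5}). This contradicts the assumption, so EVCSPP must be infeasible. No additional structural argument is required beyond invoking the lemma; in particular, the constructive padding procedure that turns 0-entries into 1-entries without violating the capacity constraint \eqref{OFcons} or the connectivity constraints \eqref{Osubcon1}--\eqref{Osubcon5} has already been carried out in the proof of Lemma \ref{lemma:upperbound}.

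There is no real obstacle here; the corollary is essentially a one-line consequence of the biconditional, and the only care needed is to state it cleanly and cite the correct direction of Lemma \ref{lemma:upperbound}. A two-sentence proof that writes out the contrapositive explicitly will suffice for the paper.
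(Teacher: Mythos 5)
Your proposal is correct and matches the paper's (implicit) reasoning exactly: the corollary is simply the contrapositive of the only-if direction of Lemma \ref{lemma:upperbound}, and the paper offers no further argument beyond stating it as an immediate consequence. Nothing more is needed.
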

Corollary \ref{coro:infeasible} can be used to check the feasibility of a problem instance.

\begin{algorithm}
\caption{Greedy Algorithm} \label{greedy}
\footnotesize
\begin{algorithmic}[1]
\STATE Set $x_i=1$ for $i=1,\ldots,n$.
\REPEAT
	\STATE Construct a node set $\overline{\mathcal{N}}$ composed of nodes $i$ with $x_i=1$ where the induced subgraph is still connected when $x_i$ is set to zero.
	\STATE $x'\gets x$
	\STATE $flag \gets 0$
	\REPEAT
		\STATE Select $j$ with the largest $c_j$ in $\overline{\mathcal{N}}$.
		\STATE Modify $x'$ by setting $x'_{j}=0$
		\IF {$x'$ satisfies \eqref{OFcons}}
			\STATE $x\gets x'$
			\STATE $flag \gets 1$
		\ELSE
			\STATE $x'_{j}\gets 1$
			\STATE Remove $j$ from $\overline{\mathcal{N}}$
		\ENDIF
	\UNTIL {$flag = 1$} OR {$\overline{\mathcal{N}}=\emptyset$}
\UNTIL {$\overline{\mathcal{N}}=\emptyset$}
\end{algorithmic}
\end{algorithm}

Assume that we have a feasible problem instance. 
We construct a greedy algorithm by reducing the total cost as much as possible in each iteration and it results in a sub-optimal solution.
Its pseudocode is given in Algorithm \ref{greedy}. In Line 1, we start with the feasible $x=[x_1,\ldots,x_n] = [1,\ldots,1]$ explained in Lemma \ref{lemma:upperbound} and then go through a certain number of iterations (Lines 2--17). In each iteration, we select those nodes in the subgraph induced by $x$ which will not disconnect the subgraph if we remove them from the subgraph and we call this selection $\overline{\mathcal{N}}$ (Line 3). For example, Fig. \ref{fig:greedy} shows a graph $\hat{G}$ of six nodes, where a dot $i$ and a hole $j$ mean $x_i=1$ and $x_j=0$, respectively. In this case, we have $\overline{\mathcal{N}}=\{1,3,6\}$. We can see that the resultant $x'$ formed by removing any one node in $\overline{\mathcal{N}}$ will still satisfy Constraints \eqref{Osubcon1}--\eqref{Osubcon5}. Then we attempt to deselect the one (say node $j$) with the highest cost $c_j$ in $\overline{\mathcal{N}}$ (Line 7). If the resultant $x'$ satisfies \eqref{OFcons}, $x'$ is a feasible point and we proceed to the next iteration (Lines 9--11). Otherwise, we remove $j$ from $\overline{\mathcal{N}}$ (Line 14). Instead of deselecting $j$ (Line 13), we deselect the one with the next highest cost. The iterations terminate when no nodes remain in $\overline{\mathcal{N}}$ (Line 17). The final solution $x$ is the best determined by the greedy algorithm. Note that the resultant solution is usually sub-optimal, especially when the problem size $n$ becomes larger.

\begin{figure}[!t]
\centering
\vspace{-3mm}
\includegraphics[width=2.8cm]{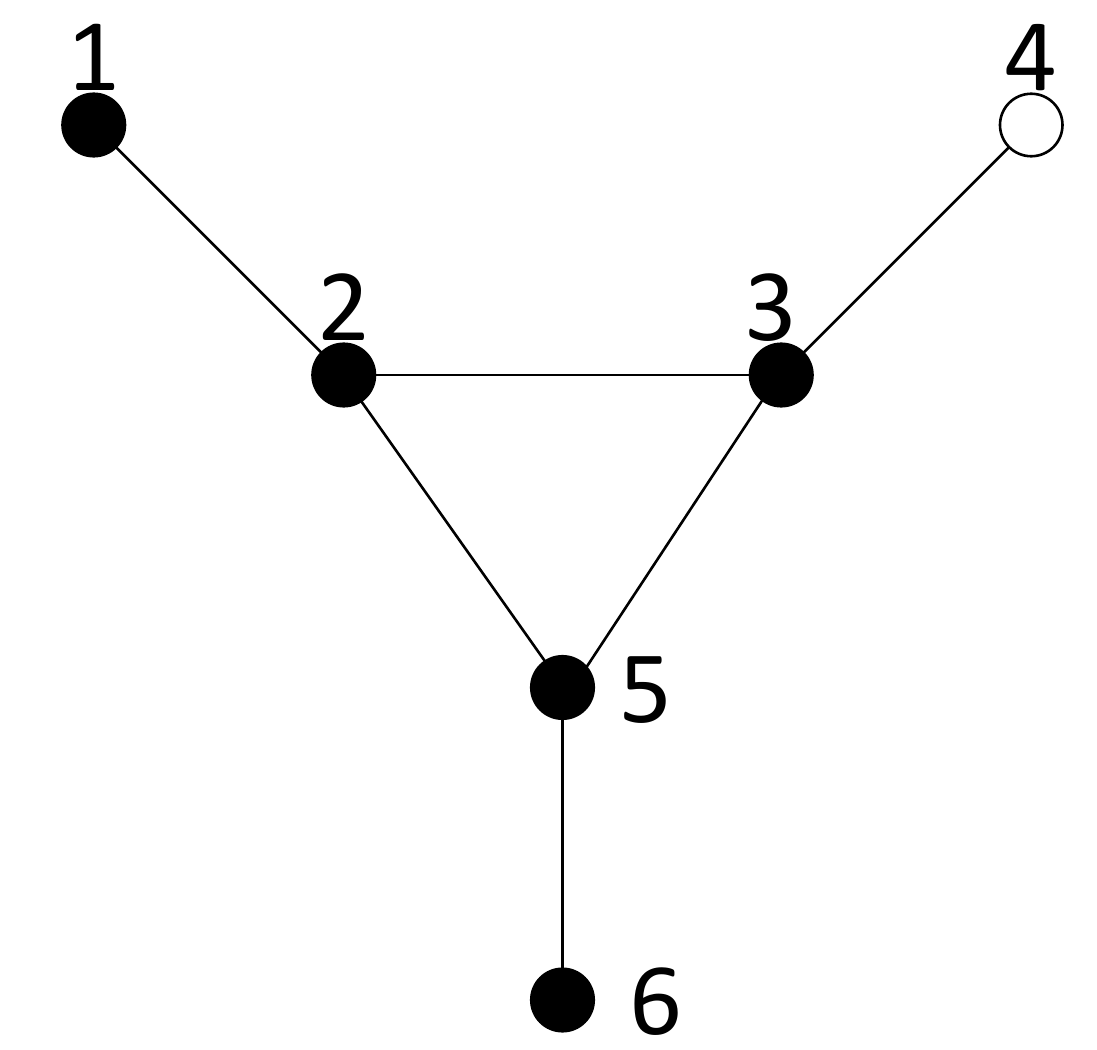}
\vspace{-3mm}
 \caption{Node selection in the greedy algorithm.}
\label{fig:greedy}
\vspace{-6mm}
\end{figure}

\vspace{-5mm}
\subsection{Method III: Effective Mixed-Integer Linear Program}

Recall that in Method I, we need to apply MILP \eqref{probi} to every node since we are not sure which node $i$ has $x_i=1$ in the optimal solution and thus it is usually subject to long computation time.
However, if we know the node which has unity in the solution, we can save lots of effort by applying \eqref{probi} to that node only. With Theorem \ref{thm:method3}, under a general condition, we find that not all nodes $i$ are required to generate the solution as in Method I.

\begin{thm} \label{thm:method3}
Suppose that the demand requirements $F_i$ for all $i$ are positive. Then for any node $i$, at least one node $j$ in $i$'s one-hop neighborhood in $\hat{G}$, i.e., $\mathcal{N}_i^{\alpha D}$, must have $x_j=1$ in the optimal solution of \eqref{originalprob}.  
\end{thm}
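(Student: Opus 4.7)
The plan is to argue directly from the demand satisfaction constraint \eqref{OFcons} rather than invoking any sophisticated structural property of the induced subgraph. For every node $i$, any feasible solution (including the optimal one) must satisfy
\[
\sum_{j\in \mathcal{N}_i^{\alpha D}} f_j x_j \geq F_i.
\]
First I would observe that under the hypothesis $F_i > 0$, the right-hand side is strictly positive, and therefore the left-hand side must also be strictly positive in any feasible $x$. Since each $f_j \geq 0$ and each $x_j \in \{0,1\}$, a positive sum forces at least one index $j\in \mathcal{N}_i^{\alpha D}$ to contribute a positive term, i.e., $f_j > 0$ and $x_j = 1$.

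Next I would identify this index $j$ with a node of $\hat{G}$ lying in the one-hop neighborhood of $i$. Recalling the construction of $\hat{G}$, its edge set contains $(i,j)$ whenever $d(i,j)\leq D$, and since $\alpha\in(0,1]$ we have $\mathcal{N}_i^{\alpha D}\subseteq \mathcal{N}_i^{D}$; hence every $j\in \mathcal{N}_i^{\alpha D}$ with $j\neq i$ is adjacent to $i$ in $\hat{G}$, and $j=i$ itself is the trivial case where $i$ is selected. Either way, at least one node of $i$'s closed one-hop neighborhood in $\hat{G}$ has $x_j=1$, which is exactly the claim.

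There is essentially no obstacle here beyond unpacking the definitions; the theorem is a direct consequence of the positivity of $F_i$ and the form of \eqref{OFcons}, and the role of the proof is to justify Method III by certifying that it suffices to anchor the MILP at some node in $\mathcal{N}_i^{\alpha D}$ rather than at $i$ itself, thereby reducing the number of MILP instances that must be solved. The only subtlety worth a sentence is the implicit observation that feasibility of the problem itself (together with $F_i>0$) already guarantees the existence of at least one node $j\in \mathcal{N}_i^{\alpha D}$ with $f_j>0$, for otherwise constraint \eqref{OFcons} would be unsatisfiable and by Lemma \ref{lemma:upperbound} the problem would be infeasible, contradicting the premise that we are discussing an optimal solution.
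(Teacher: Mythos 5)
Your proposal is correct and uses essentially the same argument as the paper: positivity of $F_i$ forces a positive term $f_jx_j$ on the left-hand side of \eqref{OFcons}, and $\alpha\leq 1$ places that $j$ in $i$'s closed one-hop neighborhood in $\hat{G}$. The paper merely phrases this with an explicit case split on whether $x_i=1$ or $x_i=0$, which your direct reading of the constraint subsumes.
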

\begin{proof}
Since all $F_i$'s are positive, $[x_1,\ldots,x_n] = [0,\ldots,0]$ can never be a solution. Hence, the solution must contain at least one node $i$ with $x_i=1$.
Consider any particular node $i$. If $x_i$ is set to one in the optimal solution, then the result holds by assigning $j$ to $i$ as $i\in\mathcal{N}_i^{\alpha D}$.

Consider that $x_i$ is set to zero. As $F_i$ is positive, at least one term $f_j x_j$ on the left-hand side of \eqref{OFcons} must be positive. In other words, at least one node (e.g., $j$) in $\mathcal{N}_i^{\alpha D}$ has $x_j=1$. Since $\alpha\leq 1$, node $j$ must be a one-hop neighbor of node $i$. Since all $F_i$'s are positive, the condition applies to every node. 

Hence, the result is true for every node in the graph. 
\end{proof}

 With this theorem, we compose Method III by choosing any node $i$ and applying Problem \eqref{probi} with respect to those nodes in $\mathcal{N}_i^{\alpha D}$ only. The number of \eqref{probi} required to be solved depends on the cardinality of $\mathcal{N}_i^{\alpha D}$. We can minimize the computational time by choosing the node $i$ with the smallest degree in $\hat{G}$. In this way, we can simplify Method I by exploiting the network structure of the graph and the solutions of both Methods I and III are equivalent. Similar to Method I, If the solver applied to \eqref{probi} can produce the optimal solution, Method III will also guarantee the optimality.
Since EVs are movable in a city, it is common to have EVs appearing in every location (node) in a certain time-span, and thus we have positive $F_i$ for all nodes $i$. Hence, the condition imposed in Theorem \ref{thm:method3} generally holds in most situations.

\subsection{Method IV: Chemical Reaction Optimization}
Chemical Reaction Optimization (CRO) is a recently proposed Nature-inspired metaheuristic for optimization \cite{CRO}. Under certain conditions, it has been proved to be able to converge to the global optimum for combinatorial optimization problems (like EVCSPP) \cite{CRO_proof} and it has been demonstrated to have very good performance in solving real-world problems, 
e.g., \cite{CRO_grid,CRO_cognitive}.
CRO is general-purpose and we apply CRO to EVCSPP.
In CRO, the manipulated agents are molecules, each of which carries a solution. The molecules explore the solution space of the problem through a random series of elementary reactions taking place in a container. We define four types of elementary reactions, each of which has its own way to modify the solutions carried by the involved molecules. Due to space limitation, we do not illustrate every detail of CRO but explain the necessary modifications based on the framework described in \cite{CRO}. We basically follow \cite{CRO} to construct the algorithm. It consists of four elementary reactions, including on-wall ineffective collision, decomposition, inter-molecular ineffective collision, and synthesis. They are implemented as follows:

\subsubsection{On-wall ineffective collision}
It mimics that a molecule hits a wall of the container and then bounces back. This elementary reaction is not vigorous and we only \textcolor{black}{have} small modifications to the molecule. Let $x$ and $x'$ be the solutions held by the molecules before and after the change. We apply our greedy approach (Method II)\footnote{Note that we can initiate the greedy algorithm with any $x$ instead of the unity vector $[1,\ldots,1]$ by skipping Line 1 in Algorithm \ref{greedy}.} to $x$ to produce $x'$, i.e., $x \xrightarrow{\text{greedy}} x'$.

\subsubsection{Decomposition}
It describes that one molecule hits a wall and breaks into two separate molecules. It involves vigorous changes to the molecules. Let $x$ be the solution held by the reactant molecule and $x_1'$ and $x_2'$ be the solutions of the resultant molecules. Here $x_1'$ and $x_2'$ are randomly generated in the solution space. A random solution can be produced by modifying Algorithm \ref{greedy} where the repeat loop (Lines 6--16) iterates for a random number of times (between $1$ to $n$) and we select a random node in $\overline{\mathcal{N}}$ in Line 7.\footnote{Although the generations of $x_1'$ and $x_2'$ do not rely on $x$, the energies stored in the molecules do. Interested readers may refer to \cite{CRO} for more information.} A decomposition can be described as $x \xrightarrow{\text{random}} x_1'+x_2'$.

\subsubsection{Inter-molecular ineffective collision}
It portrays that two molecules collide with each other and then bounce away. The change is not vigorous. Let $x_1$ and $x_2$ be the two reactant molecules and $x_1'$ and $x_2'$ be the resultant molecules. Similar to the on-wall ineffective collision, we apply the greedy algorithm to the respective molecules to modify the solutions, i.e., $x_1 + x_2 \xrightarrow{\text{greedy}} x_1' + x_2'$.

\subsubsection{Synthesis}
Synthesis describes that two molecules collide with each other and then combine into one molecule. The change is vigorous. Similar to decomposition, we produce a new molecule by randomly generating its solution in the solution space. Let $x_1$ and $x_2$ be the two reactant molecules and $x'$ be the resultant molecule. We have $x_1 + x_2 \xrightarrow{\text{random}} x'$.

When initializing the algorithm, we assign random solutions in the solution space to the molecules (this can be done by the random solution generation used in decomposition). It is clear that Method II is embedded in this method except that Method II always start with a unity vector $x$. We can guarantee that Method IV is always superior to Method II in terms of solution quality by having at least one molecule possessing a unity vector as its initial solution. So we can assign the unity vector to some initial molecules (say 10\%) in the initialization phase. Since CRO is a probabilistic algorithm, the solutions produced in different runs could be different.

\section{Performance Study} \label{sec:performance}
\subsection{Simulation Results}

\begin{table*}[!t]
\renewcommand{\arraystretch}{1.1}
\caption{Simulation results for $n=50$ and $D=20$ km}
\label{tab:performance}
\centering
\begin{tabular}{c|p{1cm} || p{1.2cm}|p{1.2cm}|p{1.2cm}|p{1.2cm}|p{1.2cm}|p{1.2cm}||c|c|c|p{1.2cm}}
\hline\hline
\multirow{2}{*}{$\alpha$} & \multirow{2}{1cm}{\tiny Matched/ Feasible cases} & \multicolumn{6}{|c||}{Objective function value (No. of stations constructed)} & \multicolumn{4}{c}{Computation time (s)}\\ \cline{3-12}
 	&  	& Method I &  Method II & Method III	& Method IV (mean) & Method IV (best) & Method IV (worst)	& Method I &  Method II & Method III	& Method IV (mean) \\
\hline
1		& 32/100	&	9.4875 (24.7500)	& 9.7553 (25.7500)	& 9.4875 (24.7500)	& 9.6119 (25.2150)	& 9.5092 (24.8800)	& 9.7158 (25.5700)	& 478.6861	& 0.0610	& 23.6505	& 20.4516\\
0.9	& 20/62		&	10.8950 (27.2097)	& 11.0986 (28.0000)	& 10.8950 (27.2097)	& 11.0050 (27.5435)	& 10.9235 (27.2903)	& 11.0745 (27.8387)	& 369.5348	& 0.0618	& 18.2473	& 23.3749\\
0.8	& 8/23		&	12.5701 (30.4783)	& 12.8035 (31.2174) & 12.5701 (30.4783)	& 12.6786 (30.6826)	& 12.5818 (30.4783)	&	12.7829 (30.9130)	&	310.1252	& 0.0611	& 15.7985	& 29.6049\\
0.7 & 0/2			&	13.3116 (33.5000)	& 13.8089 (34.5000)	& 13.3116 (33.5000)	& 13.4800 (33.5000)	&	13.3182 (33.5000)	& 13.7777 (33.5000)	& 331.2596	& 0.0542	& 19.9461	& 34.9514\\
\hline\hline
\end{tabular}
\end{table*}

 We perform a series of simulations to evaluate the performance of the four solution methods.  All simulations are run on the same computer with Intel Core i7-3770 CPU at 3.40GHz and 16GB of RAM, and conducted in the MATLAB environment. For Methods I and III, the MILP is computed with the CPLEX solver \cite{cplex} and YALMIP \cite{yalmip}.  Recall that Methods I, II, and III are deterministic while Method IV is probabilistic. For illustrative purposes, we repeat Method IV 10 times for each simulation case. After several trial runs, we set the parameter values for Method IV as: ``function evaluation (FE) limit'' = 2000, ``initial kinetic energy (KE)'' = 10, ``initial population size'' = 40, ``initial buffer'' = 10, ``collision ratio'' = 0.5, ``synthesis threshold'' = 0.5, ``decomposition threshold'' = 20, and ``KE loss rate'' = 0.9.
We conduct three tests. In the first test, we examine the solutions' performance with changing $\alpha$. The second aims to study how the computation time grows with the problem size. In the third, we test how the solution methods perform in a real-world scenario.


In the first test, we randomly generate 100 feasible instances.
Each instance of $G$ is constructed by randomly placing 50 nodes in an area of $100\times 100$ km$^2$, where we assign a random value in the range of $(0,1]$ to the cost $c_i$, and $D$, $f_i$ and $F_i$, for all $i$, are set to 20 km, 0.5, and 1, respectively. For simplicity, we assume that the nodes are interconnected and the length of the shortest path of each pair of nodes is determined with the Euclidean distance between them. As explained in Section \ref{sec:formulation}, we can produce $\hat{G}$ from $G$. Then we can check the feasibility of each instance with Corollary \ref{coro:infeasible}.

We verify the performance of the four methods with respect to  the computed solution quality and the computation time. 
The results are given in Table \ref{tab:performance}.  
The second column indicates the number of feasible and matched cases among the 100 graphs. All graphs are feasible when $\alpha$ is equal to one. When $\alpha$ decreases, the number of resulted feasible cases will also decrease as Constraint \eqref{OFcons} becomes stronger. 
The matched cases indicate those of the feasible ones producing the same objective function values by all the four approaches. Regardless of the non-statistically significant cases with $\alpha=0.7$, all the four methods can produce the best solutions for around $\frac{1}{3}$ of the feasible cases. 
The other columns show the average objective function values and computation times of the four methods for the feasible cases. For Method IV, we also provide the average (among the cases) of the best (among the repeats) and the worst (among the repeats) for reference. 
The average numbers of charging stations appeared in the solutions are put in brackets.
Methods I and III always give the best solutions and Method IV always outperforms Method II. In terms of computation time, Method II is the fastest and Method III comes the second. Method IV is the next and Method I takes the longest.

In the second test, we study how the computation time changes with the problem size. The setting is similar but we fix $\alpha$ to one for different values of $n$. We generate 10 feasible cases for $n$ equal to 10, 50, 100, 150, and 200. Fig. \ref{fig:sim_n} shows the average computation times of the four methods in the logarithmic scale. The corresponding objective function values normalized by computed minimums are also given for reference. All the computation times increase with $n$. Method I takes the longest computation time which also grows the fastest. Method III needs less time than Method IV when $n$ is small. However, when $n$ is larger than 100, Method III requires more time to compute the solution. In other words, 
the computation time of Method III grows faster than that of Method IV. Method II needs the shortest time but its computed solution quality is the worst. Although Methods I and III require relatively more time, their solutions are the best. 
Note that the results of Methods I and III for $n=200$ are not shown because they are not computable by YALMIP/CPLEX due to the out-of-memory problem. This implies the MILP approaches are not suitable for large problems.\footnote{We run the simulations in MATLAB. The out-of-memory problem may happen at another $n$ with a different combination of machine and platform. Here we just demonstrate that Methods I and III are not scalable.}
As before, Method IV always produces better solutions than Method II. 
The average numbers of charging stations constructed by the four methods are given in Table \ref{tab:numstation}. 
Moreover, we perform a series of tests to check the convergence of Method IV. We run CRO for some cases of different sizes used in the second test with duration up to 10 000 FEs. Fig. \ref{fig:convergence} gives the convergence curves  for particular cases; for clear illustration (of the performance at the beginning), we give the performance in first 500 function evaluations only.  The results show that the algorithm converges very fast and can converge within 500 FEs in all the cases. Hence it is concluded that our evaluation limit of 2000 for Method IV applied to all the three tests is sufficient.

\begin{figure}
\centering
\includegraphics[width=3.5in]{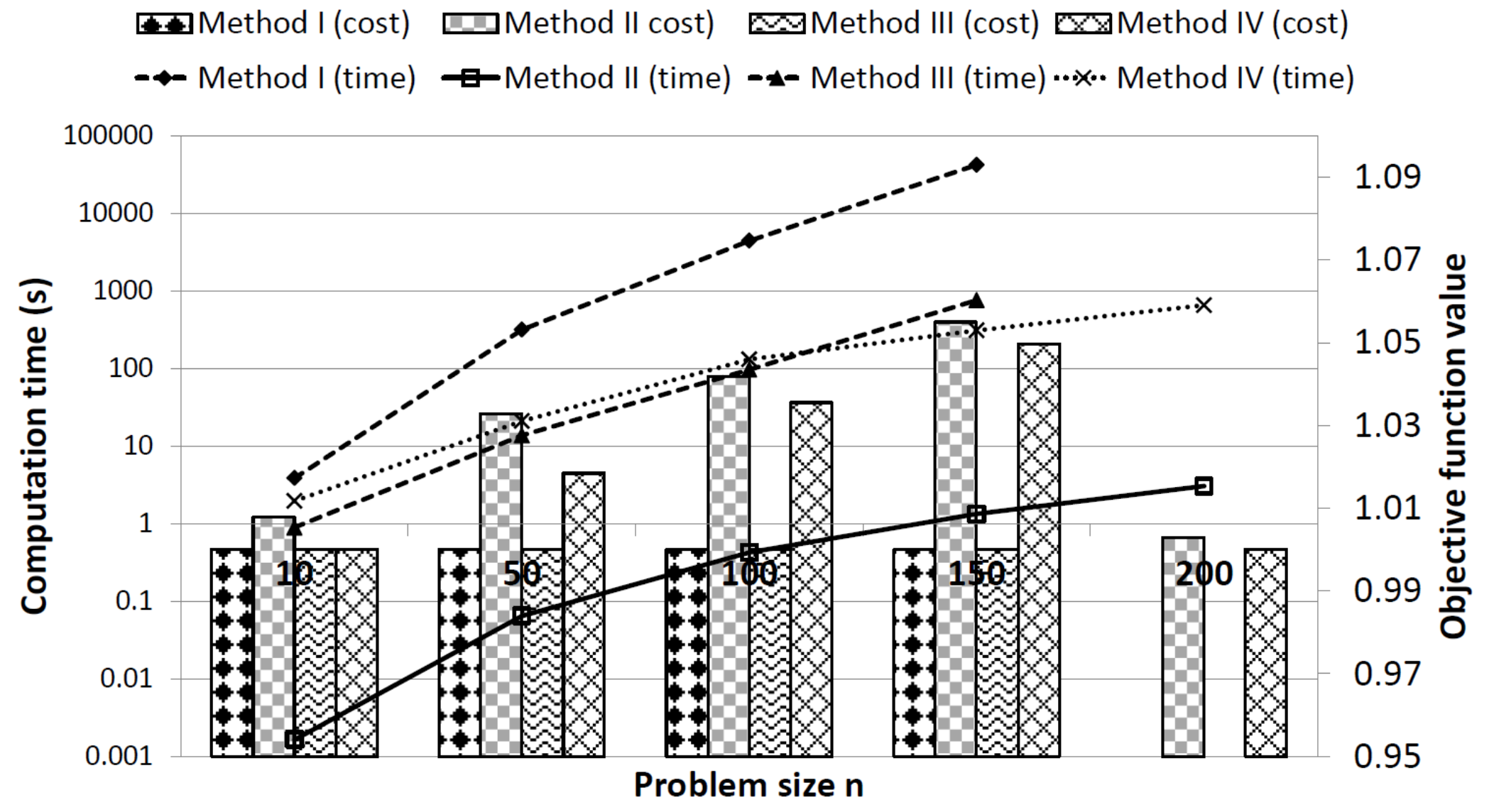} 
 \caption{Computation time changing with problem size.}
\label{fig:sim_n}
\end{figure}

\begin{table}[!t]
\renewcommand{\arraystretch}{1.1}
\caption{Average number of stations computed in the second test.}
\label{tab:numstation}
\centering
\begin{tabular}{c|c | c|c|c}
\hline\hline
 	& Method I 	& Method II &  Method III & Method IV\\
\hline
10		& 6.30			&	6.40		& 6.30	& 6.30	\\
50		& 24.40			& 25.60		& 24.40	& 24.90		\\
100		& 43.60			&	45.20	& 43.60 	& 44.89	\\
150		& 45.80			&	49.30	& 45.80 	& 48.45	\\
200	  & -			&	51.1	& -	& 50.54	\\
\hline\hline
\end{tabular}
\end{table}

\begin{figure}
\centering
\includegraphics[width=3.5in]{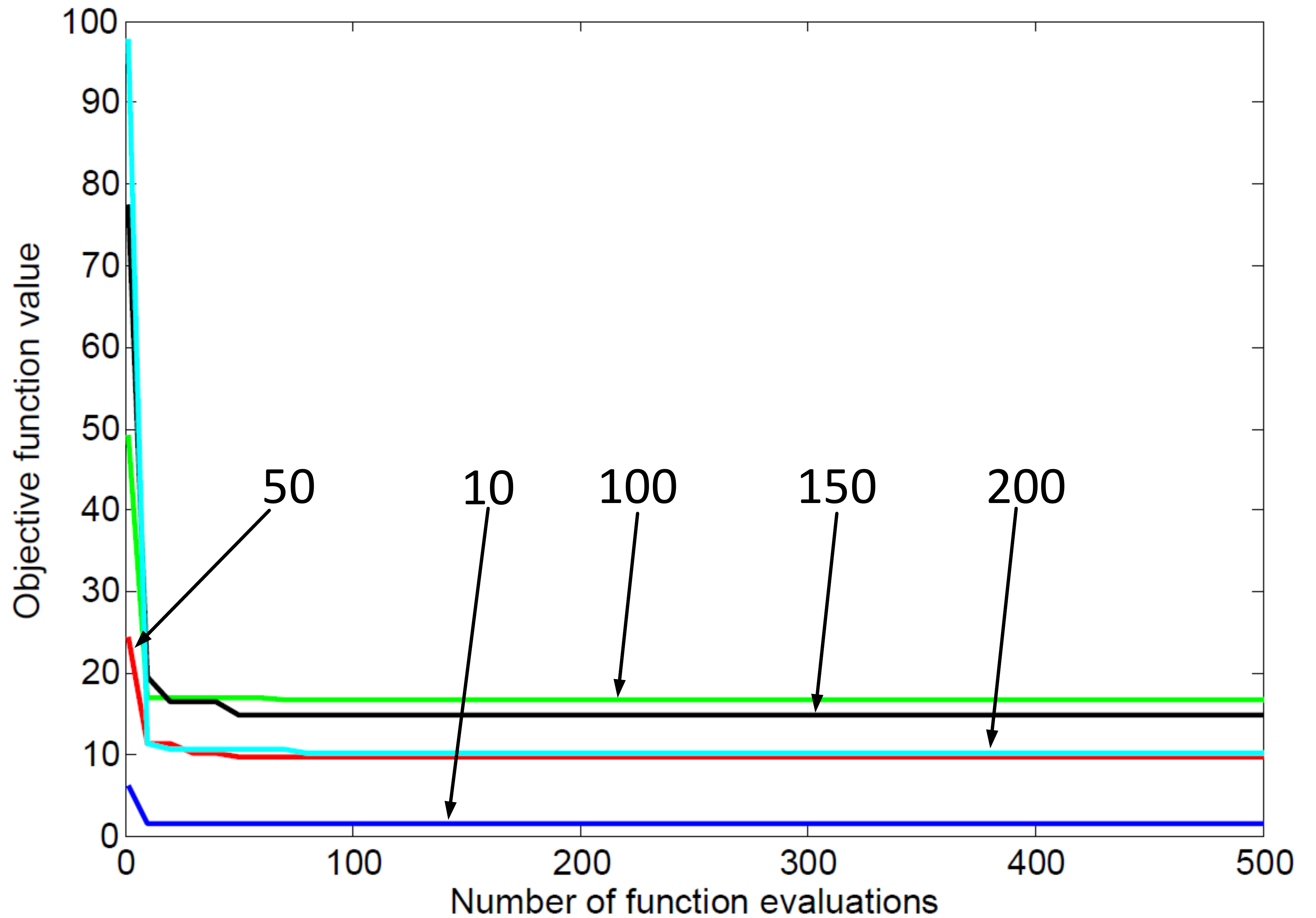} 
 \caption{Convergence of Method IV.}
\label{fig:convergence}
\end{figure}

In the third test, we apply the problem to a real-world environment; we determine the locations for building charging stations in Hong Kong (HK). The HK Government plans to introduce more EVs (e.g., taxi) into the city and the construction of charging stations is one of the crucial steps in the plan. We can see how this can be realized through solving EVCSPP. HK is composed of three zones (New Territories, Kowloon, and HK Island) and each zone is further divided into districts. There are total 18 districts in HK. We select one location in each district for potential charging station construction. The location distribution is given in Fig.~\ref{fig:HKmap}. The distance between each pair of locations is retrieved from the route connecting them suggested by Google Map \cite{googlemap}. We relate the location parameters to the district data obtained from \cite{18districts}. We assign the population size to the demand $F_i$ and the median monthly income per capita to the cost $c_i$. We set the capacity $f_i$ inversely proportional to the density with some proportionality constants. The location parameter values are listed in Table~\ref{tab:data}. We perform simulations with several combinations of $D$ (30, 35, 40, 45, and 50 km) and $\alpha$ (1, 0.8, and 0.6) and the performance of the methods is given in Table~\ref{tab:HKsim} where the best  objective function values are bold and the numbers of stations constructed are put in brackets. The number of nodes matched in the solutions of the four methods (the best solutions for Method IV) for each feasible case is also provided.
In Fig. \ref{fig:HKmap}, for the case with $\alpha$ and $D$ equal to $0.6$ and $45$ km, respectively, those locations in red indicate that charging stations should be constructed according to the solutions computed by the four methods; the roman numbers beside a location reveal which methods have included that location in their solutions.\footnote{As the solutions of Method IV computed in different runs can be different, only its solution of a particular run is given in Fig. \ref{fig:HKmap}.}

\begin{figure}
\centering
\includegraphics[width=3.5in]{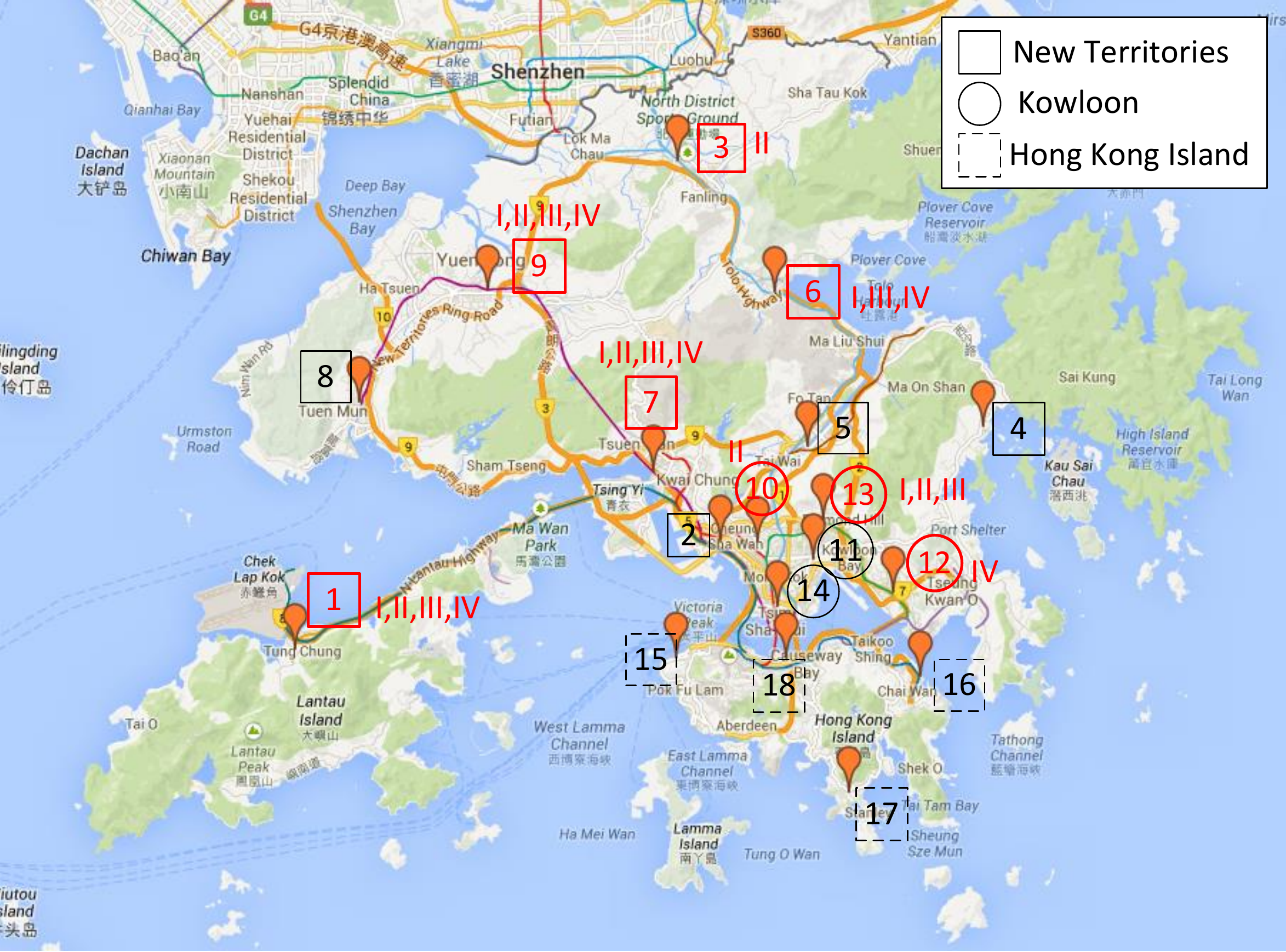}
\vspace{-3mm}
 \caption{Charging station distribution (adopted from \cite{googlemap}).}
\label{fig:HKmap}
\end{figure}

\begin{table}[!t]
\renewcommand{\arraystretch}{1.1}
\caption{Charging station data (from the 2006 census)}
\label{tab:data}
\centering
\begin{tabular}{c|c | c|c|c}
\hline\hline
District 	& Population (k) 	& Density (/km$^{2}$) &  \multicolumn{1}{p{2cm}|}{Median monthly per capita (HK$\$$)} & $f_i$\\
\hline
1		& 137.1			&	783		& 5659	& 1277.14	\\
2		& 523.3			&	22421	& 4833 	& 44.60	\\
3		& 280.7			&	2055	& 5161 	& 486.62	\\
4 	& 406.4			&	3135	& 6774	& 318.98	\\
5	  & 607.5			&	8842	& 6232	& 113.10	\\
6	  & 293.5			&	2156	& 5806	& 463.82	\\
7   & 288.7			&	4679	& 6897	& 213.72	\\
8   & 502.0			&	6057	& 5172	& 165.10	\\
9   & 534.2			&	3858	& 4777	& 259.20	\\
\hline
10  & 365.5			&	39095	& 4821	& 255.79	\\
11  & 362.5			&	36178	& 6897	& 276.41	\\
12  & 587.4			&	52123	& 4845	& 191.85	\\
13  & 423.5			&	45540	& 4750	& 219.59	\\
14  & 280.5			&	40136	& 6034	& 249.15	\\
\hline
15  & 250.0			&	20102	& 9722	& 99.49	\\
16  & 587.7			&	31664	& 7235	& 63.16	\\
17  & 275.2			&	7083	& 6563	& 282.37	\\
18  & 155.2			&	15788	& 10185	& 126.68	\\
\hline\hline
\end{tabular}
\end{table}

\begin{table*}[!t]
\renewcommand{\arraystretch}{1.1}
\caption{Simulation results for the Hong Kong case}
\label{tab:HKsim}
\centering
\begin{tabular}{c|c || p{0.8cm}| c|c|c|p{1.2cm}|p{1.1cm}|p{1.1cm}||c|c|c|p{1.1cm}}
\hline\hline
\multirow{2}{*}{$\alpha$}	& \multirow{2}{*}{$D$ (km)} & \multirow{2}{0.8cm}{No. of Matched nodes}& \multicolumn{6}{|c||}{Objective function value (No. of stations constructed)}  & \multicolumn{4}{c}{Computation time (s)}\\ \cline{4-13} 
 &  &	& Method I &  Method II & Method III	& Method IV (mean) & Method IV (best) & Method IV (worst)	& Method I &  Method II & Method III	& Method IV (mean) \\
\hline
\multirow{5}{*}{1}	&30	& 4 & \textbf{19 181} (4)	& \textbf{19 181} (4)	& \textbf{19 181} (4)	& \textbf{19 181} (4)		& \textbf{19 181} (4)	& \textbf{19 181} (4)		 & 16.81	& 5.77E-03	& 3.77 	& 6.70\\
	&35	& 2 & \textbf{9571}	(2) & \textbf{9571}	(2)			& \textbf{9571}	(2) 	& \textbf{9571}	(2)			& \textbf{9571}	(2)	& \textbf{9571}	(2)				 & 17.14	& 5.46E-03	& 10.71	& 3.62\\
	&40	& 2 & \textbf{9571}	(2)	& \textbf{9571}	(2)			& \textbf{9571}	(2) 	& \textbf{9571}	(2)		& \textbf{9571}	(2)	& \textbf{9571}	(2)		&   17.31	& 5.86E-03	& 12.62	& 3.63\\
	&45 &	2 & \textbf{9527}	(2)	& \textbf{9527}	(2)			& \textbf{9527}	(2)	& \textbf{9527}	(2)			& \textbf{9527}	(2)	& \textbf{9527}	(2)				 &  17.89	& 5.75E-03	& 14.80	& 3.28\\
	&50 &	2 & \textbf{9527}	(2)		& \textbf{9527}	(2)			& \textbf{9527}	(2)		& \textbf{9527}	(2)			& \textbf{9527}	(2)		& \textbf{9527}	(2)			 & 17.85	& 5.56E-03	& 18.30	& 3.76\\
\hline
\multirow{5}{*}{0.8}	&30	& - & -	& -	& -	& -		& -	& - 		 & -	& -	& -	& -\\
	&35 &	1 & \textbf{23 768} (4)	& 30 001 (6)			& \textbf{23 768} (4) 	& \textbf{23 768} (4)		& \textbf{23 768} (4)	& \textbf{23 768} (4)		 & 17.17	& 6.88E-03	& 10.79	& 10.78\\
	&40	& 1 & \textbf{11 718} (2)	& 14 348 (4)			& \textbf{11 718} (2) 	& \textbf{11 718} (2)			& \textbf{11 718} (2)	& \textbf{11 718} (2) 				& 17.45	& 5.31E-03	& 12.69	& 4.59\\
	&45 & 2	& \textbf{9571} (2)	& \textbf{9571} (2)			& \textbf{9571} (2)	& \textbf{9571} (2)		& \textbf{9571} (2)	& \textbf{9571} (2) 		 & 17.76	& 5.30E-03	& 14.86	& 3.72\\
	&50 & 2	& \textbf{9571} (2)	& \textbf{9571} (2)			& \textbf{9571} (2)	& \textbf{9571} (2)		& \textbf{9571} (2)	& \textbf{9571} (2)		 & 17.86	& 5.59E-03	& 17.91	& 3.81\\
\hline
\multirow{5}{*}{0.6}	&30 & -	& -		& -		& -		& -			& -		& -			& - 		& - 		& -		& -	\\
	&35	& - & -		& -		& -		& -			& -		& -			& - 		& - 		& -		& -	\\
	&40	& - & -		& -		& -		& -			& -		& -			& - 		& - 		& -		& -	\\
	&45 &	4 & \textbf{27 889} (5)	& 30 065 (6)			&  \textbf{27 889} (5)	& 27 912.7 (5)	& \textbf{27 889} (5)		& 27 984 (5)		& 17.62	& 7.55E-03	& 14.73	& 9.35\\
	&50 & 4	& \textbf{19 181} (4)	& \textbf{19 181} (4)	& \textbf{19 181} (4)	& \textbf{19 181} (4)		& \textbf{19 181} (4)	& \textbf{19 181} (4)		& 17.79	& 4.83E-03	& 17.87	& 5.76\\
\hline\hline
\end{tabular}
\end{table*}

Methods I and III always find the best solutions for all cases. Method IV can determine the best solutions in most cases while Method II can still achieve the best solutions in some cases. For computation time, on the average, Method II is the fastest, and then Method IV. Method III comes next and Method I takes the longest. In general, the computation times of Methods I, II, and IV decrease with $D$ since $\hat{G}$ becomes denser with $D$ and we need less effort to locate the solutions. But Method III does the opposite because nodes tend to have larger degrees with $D$. Thus the number of MILPs needed to be solved increases together with the minimum degree of $\hat{G}$.

\subsection{Discussion}

From the simulations above, we can see that each method has its own characteristics and is suitable for different situations. Here we try to compare them in terms of five different perspectives independently:

\subsubsection{Solution quality}
If the adopted MILP solver can guarantee optimality\footnote{Most MIP solvers can generate the optimal solutions when the problem size is small.}, Methods I and III can obtain the best results. As Method II is embedded in Method IV,  Method IV is always superior to Method II but they may not produce the optimal solutions, especially when the solution space is getting larger. They can be ranked as: I$=$III$>$IV$>$II.

\subsubsection{Computation time}
Method II is of the simplest design and takes very limited amount of time to obtain a (usually sub-optimal) solution. Method I needs to apply MILP to all nodes of the problem while Method III requires only a subset. Hence, Method III is always  faster than Method I. Method IV is a metaheuristic and we can terminate the algorithm when certain stopping criteria are satisfied (in our cases, we limit the computation time by setting an FE limit). 
\textcolor{black}{As a whole, they can be ranked as: II$>$IV$>$III$>$I.} 

\subsubsection{Solvable problem size}
Solving MILP is the major building block of Methods I and III and it relies on the adopted solver. As most existing solvers can handle relatively small problems (in our case with MATLAB/YALMIP/CPLEX, the problem is only solvable with $n\leq 150$ in this study). However, since the manipulating mechanisms of Methods II and IV are mainly about how to modify and evaluate temporary solutions, they are more resistive to the growing problem size. So they can be ranked as: II$=$IV$>$I$=$III.

\subsubsection{Algorithmic nature}
All methods are deterministic except Method IV. In other words, we always come up with the identical result in different runs of the same problem instance. Method IV is probabilistic in nature. For each instance, we repeat the simulations several times to obtain its average performance.

\subsubsection{Prerequisite}
Recall that Method III is valid only when the condition imposed in Theorem \ref{thm:method3} holds. Although this condition is very general and held in most practical situations, the other methods do not require it.

We summarize the characteristics of the four methods in Table~\ref{tab:comparison}. Note that the above conclusions are drawn independently of each other from our observations on the simulation. When evaluating a particular method, we usually take several aspects into account simultaneously, e.g., correlating solution quality with computation time and problem size. However, we aim to give an extensive assessment and thus we appraise their individual abilities from one aspect to another. In general, each method has its own pros and cons and none is outstanding predominantly. In practice,  we select the most suitable method according to our need.

As a final remark, 
we aim to show that the greedy algorithm can be considered as a component in a metaheuristic, whose performance is guaranteed to be better than that of Method II. 
Method IV can be interpreted in a broader sense; it is a greedy-algorithm embedded metaheuristic approach where we can replace CRO with any metaheuristic like Genetic Algorithm \cite{GA}. Further exploration of Method IV with other metaheuristics will be left for the future work.  

\begin{table}[!t]
\renewcommand{\arraystretch}{1.1}
\caption{Solution method characteristic comparison}
\label{tab:comparison}
\centering
\begin{tabular}{p{2.1cm}|c | c|c|c}
\hline\hline
				& Method I 	& Method II & Method III & Method IV	\\
\hline
Solution quality & $\checkmark\checkmark\checkmark$		& $\checkmark$	& $\checkmark\checkmark\checkmark$ & $\checkmark\checkmark$\\
Computation time		& $\checkmark$	& $\checkmark\checkmark\checkmark\checkmark$ & $\checkmark\checkmark$ & $\checkmark\checkmark\checkmark$ \\
Problem size &	$\checkmark$ 		& $\checkmark\checkmark\checkmark$ 	& $\checkmark$	& $\checkmark\checkmark\checkmark$\\
algorithmic nature			 & Deterministic 		& Deterministic & Deterministic & Probabilistic \\
Prerequisite	 &	None & None		& Some & None\\
\hline\hline
\end{tabular}
\vspace{-5mm}
\end{table}

\section{Conclusion} \label{sec:conclusion}

Gasoline is a heavily demanded natural resource and most is consumed on transportation. Transportation electrification can relieve our dependence on gasoline and tremendously reduce the amount of harmful gases released, which partially constitute global warming and worsen our health. In the $21^\text{st}$ century, advancing EV technologies has become one of the keys to boost a nation's economy and maintain (and improve) people's quality of living. For long-term planning, modernizing our cities with EVs is of utmost importance. EVs will be integrated into the transportation system seamlessly and this will help make our cities ``smart''. To do this, we need to determine the best locations to construct charging stations in the city. In this paper, we focus on human factors rather than technological ones for charging station placement. An EV should always be able to access a charging station within its driving capacity anywhere in the city. 
Our contributions in this paper include: 1) formulating the problem, 2) identifying its properties, and 3) developing the corresponding solution methods. We formulate the problem as an optimization model, based on the charging station coverage and the convenience of drivers. We prove the problem NP-hard and propose four solution methods to tackle the problem. Each method has its own characteristics and is suitable for different situations depending on the requirements for solution quality, algorithmic efficiency, problem size, nature of the algorithm, and existence of system prerequisite.

\appendix[An illustrative example for the network flow model]
\textcolor{black}{
We make use of  the example of $\hat{G}$ given in Fig. \ref{fig:illstrative} to illustrate the network flow model discussed in Section \ref{subsec:formualtion}.
}
\begin{figure}
\centering
\includegraphics[width=3.0in]{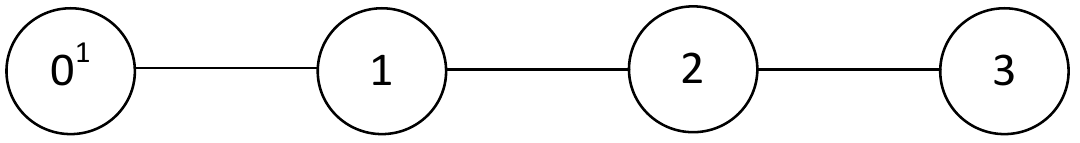}
 \caption{An example of four nodes.}
\label{fig:illstrative}
\end{figure}

\textcolor{black}{
Consider that nodes 1 and 2 have charging stations constructed and thus we have node $0^1$ sends out two units of flow on $(0^1,1)$ and hence $y_{01}^1=2$. Eq. \eqref{subcon3} indicates that the conversation of flow and thus we have $y_{12}^1=1$ as node 1 is a sink of one unit of flow. Similarly, we get $y_{23}^1=0$. In this way, we can ensure that the resultant locations of the charging stations (nodes 1 and 2 in this case) are connected.
}

\textcolor{black}{
Consider another case that nodes 1 and 3 are the locations of charging stations. So we have $x_1=x_3=1$ and $x_2=0$. Eq. \eqref{subcon4} results in $y_{01}^1=2$. Node 2 is not a sink and \eqref{subcon2} confines  $y_{12}^1=0$. To balance the incoming and outgoing flows at node 2, \eqref{subcon3} makes $y_{23}^1=0$.
 However, node 3 is a sink of one unit and when \eqref{subcon3} is applied to node 3, we need to have $y_{23}^1=1$. This results in a contradiction and hence we cannot allow constructing charging stations at nodes 1 and 3 without node 2.
}

\textcolor{black}{
Therefore, connectivity of the charging station network can be enforced with the network flow model.
}
%
%
%
%
%

\ifCLASSOPTIONcaptionsoff
  \newpage
\fi


\bibliographystyle{IEEEtran}
\bibliography{IEEEabrv}
%

%
%


\end{document}